\newcommand{\keti}[1]{\left|{\psi_{#1}}\right\rangle}
\newcommand{\lloyd}{~\cite{lloyd2016quantum}~}
\newtheorem{theorem}{Theorem}
\newtheorem{lemma}{Lemma}
\newtheorem{proposition}{Proposition}
\newtheorem{remark}{Remark}
\newcommand{\im}{\ensuremath{\texttt{i}}}
\DeclareFontFamily{U}{wncy}{}
\DeclareFontShape{U}{wncy}{m}{n}{<->wncyr10}{}
\DeclareSymbolFont{mcy}{U}{wncy}{m}{n}
\DeclareMathSymbol{\Sh}{\mathord}{mcy}{"58} 
\newcommand{\RR}{\mathbb{R}}
\newcommand{\tH}{\tilde{\mathcal{H}}}
\newcommand{\tpar}{\tilde{\partial}}
\newcommand{\trace}{\mathrm{trace}}
\newcommand{\nv}{\mathrm{n_{v}}}
\newcommand{\veps}{\varepsilon}
\newcommand{\eps}{\epsilon}
\newcommand{\PG}{P_{\Gamma}}
\newcommand{\defeq}{\stackrel{\textit{\tiny{def}}}{=}}
\newcommand{\img}{\mathrm{img}}
\newcommand{\rank}{\mathrm{rank}}
\newcommand{\bdiag}{\mathrm{Blockdiag}}
\newcommand{\pr}{\mathbb{P}}
 \newcommand{\NQTDA}{NISQ-TDA}
\title{
Topological data analysis on noisy \\quantum computers
}
 \author{Ismail Yunus Akhalwaya$^{\dag}$\thanks{These authors contributed equally to this work. Corresponding authors}  \And Shashanka Ubaru\thanks{
IBM Research, USA and South Africa} $\: ^{*}$ \And Kenneth L. Clarkson$^\dag$  \And Mark S.\ Squillante$^\dag$ \And  Vishnu Jejjala\thanks{University of the Witwatersrand, South Africa}  \And  Yang-Hui He\thanks{
 Royal Institution, UK}  \And  Kugendran Naidoo$^\ddag$  \And  Vasileios Kalantzis$^\dag$ \And  Lior Horesh$^{\dag}$}
\begin{document}

\maketitle
\begin{abstract}
Topological data analysis (TDA) is a powerful technique for extracting complex and valuable shape-related summaries of high-dimensional data. However, the computational demands of
classical algorithms for computing TDA are exorbitant, and quickly become impractical for high-order characteristics. Quantum computers offer the potential of achieving significant speedup for certain computational problems.
Indeed, TDA has been purported to be one such problem, yet, quantum computing algorithms proposed for the problem, such as the
original Quantum TDA (QTDA)  formulation by Lloyd, Garnerone and Zanardi, require currently unavailable fault-tolerance.   
In this study, we present \NQTDA, a \emph{fully implemented end-to-end} quantum machine learning algorithm needing only a short circuit-depth, that is applicable to high-dimensional classical data, and with provable asymptotic speedup for certain classes of problems. The algorithm neither suffers from the data-loading problem nor does it need to store the input data on the quantum computer explicitly. The algorithm was successfully executed on quantum computing devices, as well as on noisy quantum simulators, applied to small datasets. Preliminary empirical results suggest that the algorithm is robust to noise. 
\end{abstract}


\section{Introduction}

With the advent of modern technology, the collection of information-rich, high-dimensional data has become prevalent.
These high-dimensional datasets are typically characterized by multidimensional correlation structures that are difficult to uncover. Extracting and analyzing such structural information is crucial in machine learning as well as in accelerating scientific discovery. 
Topological data analysis (TDA) is a powerful unsupervised machine learning technique for the extraction of valuable shape-related features of large datasets~\citep{zomorodian2005computing,ghrist2008barcodes,wasserman2018topological}. It represents one of the few data analysis algorithms that can process high-dimensional datasets and reduce them to a small set of local and global signature values that are interpretable and laden with predictive and analytical value. TDA has been shown to be useful in various scientific applications, including machine learning and artificial intelligence (AI) for the analysis of deep neural network architectures (e.g., estimate the capacity~\citep{guss2018characterizing} and topological complexity~\citep{naitzat2020topology} of  neural networks); neuroscience~\citep{giusti2015clique}, where topology is used to reveal intrinsic geometric structures in neural correlations; cosmology~\citep{cole2018persistent}, where TDA is used for detecting non-Gaussianity of the cosmic microwave background 
(CMB); 
and genetics~\citep{rabadan2020identification,mandal2020topological}, for predicting phenotypes from gene co-expression or raw genomics data. 
Despite such progress in some applications, the true potential of TDA has been severely limited because classical algorithms for TDA 
have proven to be computationally prohibitive, only mitigated to some extent by sampling or by limiting calculations to low-dimensional properties.

Quantum computers represent one potential approach to address these prohibitive computational requirements of TDA.
The power of quantum computers lies in their ability to perform computations in large computational (Hilbert) spaces, accessed via relatively small physical systems~\citep{deutsch1985quantum,lloyd1996universal}. With the recognition of this novel computational power in the 1980s~\citep{feynman1982simulating}, there has been an arduous search for algorithms that achieve significant computational speedups over classical 
algorithms~\citep{shor1994algorithms,grover1996fast,nielsen2010quantum}.
Such quantum algorithms offer the potential to solve problems that can not be solved using conventional computers. Quantum computers outperforming current classical supercomputers has been termed \textit{quantum advantage} in the literature~\citep{bravyi2018quantum,arute2019quantum,deshpande2022quantum, rinott2022statistical}. However, this has not yet been achieved for any problem of practical value. 

In a seminal paper, \lloyd proposed Quantum TDA (QTDA), an algorithm that achieves an expected exponential speedup in solving an approximation of TDA. Recent works~\citep{gyurik2020towards,cade2021complexity, marcos2022,schmidhuber2022complexity} have studied the hardness of the approximation problem solved by QTDA, and  discussed the conditions under which the algorithm provably enjoys speedup over classical algorithms. Furthermore, this  speedup is not overshadowed by the data-loading cost~\citep{aaronson2015read}, which plagues several other quantum algorithms~\citep{harrow2009quantum,gilyen2019quantum}, especially those related to machine learning~\citep{biamonte2017quantum,schuld2015introduction}. However, the QTDA algorithm still requires long-lasting quantum coherence and low computational error to store and process the loaded data. Indeed, it requires \textit{fault-tolerant} quantum computing~\citep{shor1996fault, aaronson2015read,preskill2018quantum}, an error-corrected quantum computer needing a very large overhead in resources (number of low-noise qubits and operations)~\citep{arute2019quantum,zhao2020measurementreduction}. Many components of the \lloyd algorithm require fault-tolerance: Grover's search \citep{grover1996fast}, Quantum Phase Estimation \citep{nielsen2010quantum}, and repeated access to the input data.
However, fault-tolerance has not yet been achieved on currently available quantum devices, and is likely several years away from full realization~\citep{google2023suppressing}.
Intriguingly, the qubit numbers and noise levels that are currently realized in hardware are not classically simulatable, which raises the question of whether some algorithm could make use of these non-fault-tolerant noisy devices (Noisy Intermediate-Scale Quantum (NISQ)~\citep{preskill2018quantum}) for quantum advantage?

In this paper we present a quantum algorithm for solving the same problem as QTDA with an improved runtime, shorter circuit depth, and without fault tolerance requirements.
Our \NQTDA\ algorithm solves the principal problem of TDA, estimating the Betti numbers of the given data~\citep{ghrist2008barcodes}. The algorithm only requires pairwise distances of the $n$ data-points as input and outputs an estimate for the (normalized) Betti numbers of the data, which are signature values that describe the shape of the data. However, the calculation of these Betti numbers by current methods requires operating on  large exponential-sized matrices (details of TDA and Betti numbers are provided in the next section). The approximation problem solved by our algorithm is believed to be intractable classically (likely belonging to a class of problems called DQC1-hard~\citep{morimae2014hardness}) under certain settings~\citep{gyurik2020towards,cade2021complexity, marcos2022}, and in this sense potentially enjoys super-polynomial to exponential speedup over classical algorithms for certain classes of problems~\citep{schmidhuber2022complexity}.
We present a theoretical error analysis for the proposed algorithm, establishing error guarantees for the estimated Betti numbers, and show that the algorithm requires only $\tilde{O}(n/\sqrt{\delta})$-depth circuit complexity. We then present preliminary empirical results from implementations on  real hardware and quantum simulations that illustrate the noise resiliency of our algorithm.
Our presented theoretical and numerical results demonstrate that \NQTDA\ has the potential to be the \emph{first} generically useful NISQ algorithm.

\section{Preliminaries}\label{sec:prelim}
We begin by introducing the key concepts of quantum computing, TDA and quantum TDA (QTDA).

{\bf Quantum computing:}
Quantum computing is characterized  by operations on the quantum state of $n$ quantum bits or qubits, representing a vector in $2^n$ dimensional complex vector (Hilbert) space. The quantum operations or measurements correspond to multiplying the quantum state vector by certain $2^n\times 2^n$ matrices. Quantum circuits represent these operations in terms of a set of quantum gates operating on the qubits. The number of these gates and the depth  of the circuit define the circuit complexity of a given quantum algorithm. Quantum computers are difficult to build (preparing and maintaining the quantum states is extremely hard) and are very noisy. Therefore, the principles of quantum error correction were proposed to protect the quantum system from information loss and other damages~\citep{gottesman2010introduction}. A (large-scale) quantum computer with many qubits is said to be fault-tolerant if the device is capable of such quantum error correction. However, realization of such fault-tolerant quantum systems is  likely several years away. Currently available quantum computers are termed ``Noisy Intermediate-Scale Quantum" (NISQ)~\citep{preskill2018quantum}, and these devices are prone to considerable error rates and are limited in size by the number of logical qubits available in the system. In order to obtain results with reasonable accuracies on a NISQ device, the quantum circuit implementing a given  algorithm needs to be of short depth.

{\bf Topological data analysis:}
TDA represents one of the few data analysis methodologies that can process high-dimensional datasets and reduce them to a small set of local and global signature values that are interpretable and laden with predictive and analytical value.  
Given a set of~$n$ data-points $\{x_i\}_{i=0}^{n-1}$ in some space together with a distance metric $\mathcal{D}$, a Vietoris-Rips~\citep{ghrist2008barcodes} simplicial complex is constructed by selecting a resolution/grouping scale $\veps$ that defines the ``closeness'' of the points with respect to the distance metric $\mathcal{D}$, and then connecting the points that are a distance of $\veps$ from  each  other (i.e., connecting points $x_i$ and $x_j$ whenever $\mathcal{D}(x_i,x_j)\leq \veps$, forming a so-called 1-skeleton).
A $k$-simplex is then added for every subset of $k+1$ data-points that are pair-wise connected (i.e., for every $k$-clique, the associated $k$-simplex is added). 

Let $S_k$ denote the set of $k$-simplices in the  Vietoris–Rips complex $\Gamma=\{S_k\}_{k=0}^{n-1}$, with $s_k\in S_k$ written as $\{j_0,\ldots,j_k\}$ where $j_i$ is the $i$th vertex of $s_k$.
Let $\mathcal{H}_k$ denote an $\binom{n}{k+1}$-dimensional Hilbert space,
with basis vectors corresponding to each of the possible $k$-simplices (all subsets of size $k+1$). Further let $\tH_k$ denote the subspace of $\mathcal{H}_k$ spanned by the basis
vectors corresponding to the simplices in $S_k$,
and let $\ket{s_k}$ denote the basis state corresponding to $s_k\in S_k$.
Then, the $n$-qubit Hilbert space $\mathbb{C}^{2^n}$ is given by
$\mathbb{C}^{2^n}\cong  \bigoplus_{k=0}^n \mathcal{H}_k$.
The boundary map (operator) on $k$-dimensional simplices $ \partial_k:  \mathcal{H}_k \rightarrow {\mathcal{H}}_{k-1}$ is a linear operator
defined by its action on the basis states as follows:
\begin{align}\label{lloyd_boundary_operator}
  \partial_k  \ket{s_k} &= \sum_{l=0}^{k-1} (-1)^l \ket{s_{k-1}(l)} ,
\end{align}
where $\ket{s_{k-1}(l)}$ is the \emph{lower} simplex obtained by leaving out vertex~$l$ (i.e., $s_{k-1}$ has the same vertex set as $s_{k}$ except without $j_l$), and $s_{k-1}$ is $k-1$-dimensional, a dimension less than $s_{k}$. The factor $(-1)^l$  produces the \emph{oriented}~\citep{ghrist2008barcodes} sum of boundary simplices, which keeps track of neighbouring simplices so that $\partial_{k-1}\partial_{k} \ket{s_k} = 0$, given that the boundary of  the boundary is empty.

The boundary map $\tpar_k:  \tH_k \rightarrow \tH_{k-1}$ restricted to a given  Vietoris–Rips complex $\Gamma$  is given by $\tpar_k = \partial_k\tilde{P}_k$, where $\tilde{P}_k$ is the projector onto the space $S_k$ of $k$ simplices in $\Gamma$.
The full boundary operator on the fully connected complex (the set of all subsets of $n$ points)
is the direct sum of the $k$-dimensional boundary operators, namely
$
  \partial =  \bigoplus_k \partial_k.
$
The \emph{$k$-homology group} is the quotient space $\mathbb{H}_k := \ker(\tpar_k)/ \img(\tpar_{k+1})$, representing all $k$-holes which are not ``filled-in'' by $k+1$ simplices and counted once when connected by $k$ simplices (e.g., the two holes at the ends of a tunnel count once). Such global structures moulded by local relationships is what is meant by the ``shape'' of data.  The \emph{$k$th Betti Number} $\beta_k$ is the dimension of this $k$-homology group, namely
$
	\beta_k := \dim \mathbb{H}_k.$

These Betti numbers therefore count the number of holes at scale $\veps$, as described above. By computing the Betti numbers at different scales $\veps$, we can obtain the \emph{persistence barcodes/diagrams}~\citep{ghrist2008barcodes}, i.e.,
a set of powerful interpretable topological features that account for different scales while being robust to small perturbations and invariant to various data manipulations. These stable persistence diagrams not only provide information at multiple resolutions, but they also help identify, in an unsupervised fashion, the resolutions at which interesting structures exist.
The \emph{Combinatorial Laplacian}, or Hodge Laplacian, of a given complex is defined as $\Delta_k := \tpar_k^{\dagger}\tpar_k + 	\tpar_{k+1}\tpar_{k+1}^{\dagger}.$ From the Hodge theorem~\citep{friedman1998computing, lim_hodge_2019},
we can compute the $k$th Betti number as
\begin{equation}
	\beta_k := \dim \ker(\Delta_k).
\end{equation}

Therefore, computing Betti numbers for TDA can be viewed as a rank estimation problem (i.e., $\beta_k = \dim\tH_k - \rank(\Delta_k)$). Additional TDA details can be found in Appendix~\ref{ssec:tda}. The problem of normalized Betti number estimation (BNE) is defined as~\citep{gyurik2020towards}:
Given a set of $n$ points, its corresponding Vietoris–Rips complex $\Gamma$, an integer $0\leq k\leq n-1$, and the parameters $(\epsilon,\eta)\in (0,1)$, find the value
$\chi_k\in[0,1]$ that satisfies with probability $1-\eta$ the condition
\begin{equation}
    \left| \chi_k - \frac{\beta_k}{|S_k|}\right| \leq \epsilon,
\end{equation}
where $|S_k|$ is the the number of $k$-simplices $S_k \in \Gamma$ or $\dim\tH_k$, the dimension of the Hilbert space spanned by the set of $k$-simplices in the complex.

{\bf Quantum TDA:}
\lloyd proposed Quantum TDA (QTDA), an algorithm for solving an approximation of TDA in polynomial time for a class of simplicial complexes. Recent works have shown, e.g., ~\citep{gyurik2020towards,schmidhuber2022complexity}, that the problem QTDA solves approximately is intractable classically for certain classes of complexes. The TDA problem of computing Betti numbers exactly has been shown to be intractable for even quantum computers as decision clique homology has been proven to be QMA1-hard~\citep{marcos2022} for clique complexes; and promise weighted clique homology has been shown to be QMA1-hard and contained in QMA~\citep{king2023promise}. The approximative version that QTDA actually solves involves a different computational class: DQC1-hard. This normalized Betti number estimation problem has been shown to be DQC1-hard for general chain complexes~\citep{cade2021complexity} and is conjectured to hold for clique complexes~\citep{cade2021complexity, king2023promise}.

QTDA involves two main steps, namely: (a) repeatedly constructing the simplices in the given simplicial complex as a mixed quantum state using Grover's search algorithm~\citep{boyer1998tight}; and (b) projecting this onto the eigenspace  of $\Delta_k$ in order to calculate the \emph{Betti numbers} of the complex, using quantum phase estimation (QPE)~\citep{nielsen2010quantum} (details are provided in the Appendix~\ref{app:prelims}). 
The computational complexity is ${O}(n^5/(\delta_k\sqrt{\zeta_k}))$ where $n$ is the number of data points, $\delta_k$ denotes the smallest nonzero eigenvalue of $\Delta_k$, and $\zeta_k$ is the fraction of all simplices of order $k$ in the given complex, resulting in significant speedup over known classical algorithms. However, QTDA requires long-lasting quantum coherence to store the loaded data for the length of the long-depth circuits thus requiring fault-tolerant quantum computing. In particular, Grovers and QPE require precise phase information where any errors would accumulate multiplicatively.

\section{NISQ-TDA}
\vskip -0.1in
We now present our proposed quantum algorithm, \NQTDA, for estimating the (normalized) Betti numbers of datasets (simplicial complexes) defined through vertices and edges. The algorithm involves three key components, namely: (a) an efficient representation of the full boundary operator as a sum of Pauli operators; (b) a quantum rejection sampling technique to project onto the data-defined simplicial complex; and (c) a stochastic rank estimation method to estimate the output signature \textit{Betti} numbers.
 In order to calculate the Betti numbers, the first of two major tasks is to construct a quantum circuit that applies the data-defined Laplacian to \emph{any} input set of simplices. 
In our algorithm, this involves three main sub-components.

The first is a quantum representation of the complete (not data-defined) boundary map operator (say $B$),
called the {\bf Fermionic boundary operator}~\citep{cade2021complexity,akhalwaya2022representation}. It acts on all possible simplices with $n$ points and returns their corresponding boundary simplices.
The representation involves only unitary operators written as a sum of Pauli (fermionic) operators. The Hermitian boundary operator $B$ is written as
$
  B  = \sum_{i=0}^{n-1} a_i + a_i^\dagger,
$
where the $a_i$ are the Jordan-Wigner~\citep{jordan1928paulische} Pauli embeddings corresponding to the $n$-spin fermionic annihilation operators. 
The implementation of this fermionic boundary operator $B$ on a quantum computer requires only $n$ qubits, $O(n^2)$ gates, and an ${O}(n)$-depth circuit; see Appendix~\ref{ssec:Delrep} for details.

The second sub-component,
which we call {\bf Projection onto simplices},
consists of constructing the simplicial complex ($\Gamma$) corresponding to the given data by implementing the projector ($\PG$) onto $\Gamma$ as qubit gates and measurements. A series of multi-qubit control-NOT gates, one for each edge in the data, checks if the edges of the input simplices (in superposition) are actually present in the data. The result is stored in a \textit{flag} register which is then measured. Since there are $\binom{n}{2}\sim O(n^2)$ potential edges, this seems to require $O(n^2)$ depth. Fortunately, the checks can be run in parallel and in batches using a round-robin procedure, reusing the same flag register through the power of mid-circuit measurement. By repeating the entire circuit until the register measurements read the `all-in' flag, the input simplices are projected onto $\Gamma$.

Given the classical encoding of the $\varepsilon$-close pairs, we systematically entangle the simplices with an $n/2$-qubit flag register. The $n/2$ qubits are used to process $n/2$ pairs of vertices at a time in $n-1$ rounds, thereby covering all $\binom{n}{2}$ potential $\varepsilon$-close pairs of vertices. The projection begins by creating a uniform superposition over all simplices (or over all $k$-simplices).  We check $n/2$ pairs at a time, for all simplices in the superposition (hence the quantum speedup). The $n/2$ pairs are chosen such that  the C-C-NOT (Toffoli) gates, controlling on  pairs of vertex qubits targeting the flag register, are executed in parallel.  In each round, we measure the flag register and proceed only if we receive all zeros. This collapses the simplex superposition into those simplices that have pairs which are not missing from the adjacency graph.
The procedure succeeds when repeated $\frac{1}{\zeta_k}$ times, where $\zeta_k$ is the fraction of all possible simplices of order $k$ that are in $\Gamma$. 
The `all-orders' data-defined Laplacian can thus be expressed as $ \Delta = \PG B \PG B \PG$. 

Although this simple linear-depth circuit implementation of  $\PG$ suggests a requirement of quantum computers with all-to-all connectivity (as used in our experiments), we can indeed implement it on quantum computers with only linear qubit connectivity using a sorting network approach in $O(n)$ depth~\citep{beals2013efficient, o2019generalized}. The network uses nearest-neighbor SWAP gates and with $n$ layers of such `qubit swaps', all $\binom{n}{2}$ pair of qubits become nearest-neighbors at some layer; see~\cite{o2019generalized} for details. Moreover, if we use
$\binom{n}{2}$ ancilla qubits, one per edge, we can measure these only once at the end of the $n-1$ rounds (instead of $n/2$ flag registers and $n-1$ measure and resets), and if we measure all zeros, then the projection is successful.

Most importantly, the ability to write the Laplacian in terms of a circuit that does not require accessing stored quantum data is one of the key enabling innovations of \NQTDA. The input edge data is not stored on the quantum computer but enters through the presence or absence of the multi-qubit control gates of the projector. Every time the complex projection is called, the data is freshly and accurately injected into the quantum computer. This suggests that \NQTDA~is partially self-correcting, and under noise presence, the last application of $\PG$ mitigates the noise. When noise-levels only allow for one coherent application of $\PG$, this application meaningfully represents the data and can be used for alternate  machine learning tasks.

The third sub-component,
which we call {\bf Projection to a simplicial order},
is the construction of the projector ($P_k$) onto the $k$-simplex subspace. The circuit is a sequence of control-`add one' sub-circuits that conditions on each vertex qubit of the simplex register and increments a $\log(n)$-sized count register. The operation is equivalent to implementing  conditional-permutation, and can be efficiently implemented using diagonalization~\citep{shende2006synthesis} in the Fourier basis. Finally, the projection is completed and fully realized as a non-unitary operation by measuring the count register. 
The cost in depth is only $O(\log^2 n)$.
The data-defined Laplacian corresponding to simplicial order $k$ can thus be written as $\Delta_k = P_k \Delta P_k$.

The second major part of the \NQTDA\ algorithm,
which we call the {\bf Stochastic Chebyshev method}, consists of using
the above quantum circuit in a larger classically controlled framework, making \NQTDA\ a hybrid quantum-classical algorithm. The classical framework is a stochastic \emph{rank} estimation using the \emph{Chebyshev} polynomials~\citep{ubaru2016fast,ubaru2017fast}.
Once we obtain the rank of the Laplacian, we have the Betti numbers $\beta_k = \dim(\ker(\Delta_k)) = |S_k| - \rank(\Delta_k)$, where $S_k\subseteq \Gamma$ is the set of $k$-simplices in the given complex $\Gamma$.
Stochastic rank estimation recasts the eigen-decomposition problem into the estimation of the matrix function trace.

Assuming the smallest nonzero eigenvalue of $\tilde{\Delta}_k = \Delta_k/n$ is greater than or equal to $\delta$, we  have
\begin{equation*}
\rank(\Delta_k) \defeq \trace(h(\tilde{\Delta}_k)), \ \mbox{where} \ h(x) = \left\{\begin{array}{l l}
1  & \ \textrm{if}\ \ x \ > \delta\\
0  & \ \textrm{otherwise}\\
\end{array} \;. \right.
\end{equation*}
Supposing $\tilde{\Delta}_k=\sum_{i}\lambda_i|u_i\rangle\langle u_i|$ is the eigen-decomposition, we have  $h(\tilde{\Delta}_k)= \sum_{i} h(\lambda_i)|u_i\rangle\langle u_i|$, where the step function $h(\cdot)$ takes a value of $1$ above the threshold $\delta>0$ and the eigenvalues of $\tilde{\Delta}_k$ are in the interval $\{0\}\cup [\delta,1]$. Next, $h(\tilde{\Delta}_k)$ is approximated using a truncated  Chebyshev polynomial series~\citep{trefethen2019approximation} as $ h(\tilde{\Delta}_k) \approx \sum_{j=0}^mc_jT_j(\tilde{\Delta}_k),$
where  $T_j(\cdot)$ is the $j$th-degree Chebyshev polynomial of  the first kind and $c_j$ are the coefficients with closed-form expressions.
The trace is approximated using the stochastic trace estimation method~\citep{hutchinson1990stochastic} given by $\trace (A)  \approx \frac{1}{\nv}  \sum_{l=1}^{\nv} \langle v_l | A | v_l\rangle$, where $|v_l\rangle, l=1,\ldots,\nv$, are  random vectors with zero mean and uncorrelated coordinates. It can be shown that a set of random columns of the Hadamard matrices works well as a choice for $\ket{v_l}$, both in theory and practice (see the supplementary material). 
Sampling a random Hadamard state vector in a quantum computer can be conducted with a short-depth circuit. Given an initial state $\ket{0}$, we randomly flip the $n$ qubits (by applying a NOT gate as determined by a random $n$-bit binary number generated classically). Thereafter, we  apply the $n$-qubit Hadamard gate to produce a state corresponding to a random column of the $2^n \times 2^n$ Hadamard matrix.
Therefore, the rank of $\Delta_k$ can be approximately estimated  as
$
\rank(\Delta_k)\approx
\frac{1}{\nv}\sum_{l=1}^{\nv}\left[\sum_{j=0}^m c_j\langle v_l|T_j(\tilde{\Delta}_k)|v_l\rangle\right],
$ where the $c_j$ are Chebyshev coefficients for approximating the step function. Given a circuit that block-encodes $\tilde{\Delta}_k$, we can block-encode a $j$-degree Chebyshev polynomial $T_j(\tilde{\Delta}_k)$ using the idea of qubitization~\citep{low2019hamiltonian,gilyen2019quantum}. Details are given in Appendix~\ref{ssec:Delrep}. 

\paragraph{NISQ-TDA Algorithm:} We now have all the ingredients to present our NISQ-TDA algorithm:
 \begin{algorithm}[H]
\caption{NISQ-TDA Algorithm}
\label{alg:algo1}
\begin{algorithmic}
   \STATE {\bfseries Input:} Pairwise  distances of $n$ data points and encoding of the $\veps$-close pairs; parameters $\eps,\delta$, and $\nv=O(\eps^{-2})$; and $\nv$ $n$-bit random binary numbers.
 \STATE {\bfseries Output:} Betti number estimates $\chi_k, \: k=0,\ldots, n-1$.
 \FOR{$l=1,\ldots, \nv=O(\eps^{-2})$}
 \FOR{$j=0,\ldots,m= O(\log(1/\eps))$}
 \STATE  {\bfseries  1.} Prepare a random Hadamard  state vector $\ket{v_l}$ from $\ket{0}$ using the $l$-th random number.
\STATE  {\bfseries  2.} Use the circuits for $P_k$, $\PG$, and  $\tilde{B} = B/\sqrt{n}$ to compute \\ \quad~$\ket{\phi_l} = \ket{0^q} \tilde{\Delta}_k \ket{v_l} + \ket{\tilde{\perp}}$, where $q=\#$ancilla qubits needed for projections. 
\STATE  {\bfseries  3.} Use qubitization to form: $\ket{\psi^{(j)}_l} = \ket{0^{q+1}} T_j(\tilde{\Delta}_k) \ket{v_l} + \ket{{\perp}}$ from $\ket{\phi_l}$.
\STATE {\bfseries  4.} Compute the Chebyshev moments $\theta_l^{(j)} = \bra{v_l}T_j(\tilde{\Delta}_k) \ket{v_l} $ from $\ket{\psi^{(j)}_l}$.
\ENDFOR
\STATE For $j=0$, estimate $|S_k|$ using the average norm of the $\PG P_k \ket{v_l}$.
\ENDFOR
\STATE Estimate $\chi_k = 1-\frac{1}{\nv}\sum_{l=1}^{\nv}\left[\sum_{j=0}^m c_j\theta_l^{(j)}\right]$.
\STATE \emph{Repeat} for $k=0,\ldots, n-1$.
\end{algorithmic}
\end{algorithm} 

{\bf Analyses:}
Our \emph{NISQ-TDA} algorithm returns the estimates $\chi_k$ for the normalized Betti numbers $\beta_k/|S_k|$, for each order $k=0,\ldots,n-1$, where $|S_k|$ is the number of $k$-simplices in the given $\Gamma$. We discuss potential scientific machine learning and AI applications of \NQTDA\ in the Appendix.
The remainder of this section focuses on theoretical analyses of our \emph{NISQ-TDA} algorithm, with the formal details and proofs provided in Appendix~\ref{sec:analysis}.
We begin with the following main result.

\begin{theorem}\label{theo:main}
Assume we are given the pairwise distances of any $n$ data points and the encoding of the corresponding $\veps$-close pairs, together with an integer $0\leq k\leq n-1$ and the parameters $(\epsilon,\delta,\eta)\in (0,1)$.
Further assume the eigenvalues of the scaled Laplacian $\tilde{\Delta}_k$ are in the interval $\{0\}\cup [\delta,1]$, and choose $\nv$ and $m$ such that
\[
\nv = O\left(\dfrac{\log(1/\eta)}{\epsilon^2}\right)
\qquad\qquad \mbox{and} \qquad\qquad
m >
 \dfrac{\log(1/\epsilon)}{\sqrt{\delta}}.
\]
Then, the Betti number estimation $\chi_k \in[0,1]$ by \NQTDA, with probability at least $1-\eta$,  satisfies
 \[
  \left| \chi_k - \frac{\beta_k}{|S_k|}\right| \leq \epsilon.
  \]
\end{theorem}
Our analysis accounts for errors due to (a) polynomial approximation of the step function; (b) stochastic trace estimator; and (c) also shot noise, i.e., errors in Chebyshev moments estimation and their propagation in classical  computation; for details, see Appendix~\ref{sec:analysis}.

We next discuss the circuit and computational complexities of our proposed algorithm and show that it is NISQ implementable under certain conditions, such as the requirement for simplices-dense complexes, which commonly occur for large resolution scale. 
The main quantum component of the algorithm comprises the computation of $\theta_l^{(j)} = \bra{v_l}T_j(\tilde{\Delta}_k) \ket{v_l}$, for $j=0,\ldots,m \sim O(\log(1/\eps)/\sqrt{\delta})$, with $\nv\sim O(\eps^{-2})$ random Hadamard vectors. The random Hadamard state preparation requires $n$ single-qubit Hadamard gates in parallel and $O(1)$ time.
For a given $k$, constructing $\tilde{\Delta}_k$ involves implementing the boundary operator $\tilde{B}$ and the projectors  $\PG$ and $P_k$.
The operator $B$, involving the sum of $n$ Pauli operators, can be implemented using a circuit with $O(n)$ gates.  Constructing $P_k$ requires $O(n\log^2n)$ gates, and this succeeds for a random order $k$. Then, for $\PG$, we need to  find all the simplices that are in the complex $\Gamma$. This is achieved using $n/2$ qubits in parallel and  $n-1$ operations, and thus the time complexity remains $O(n)$. The number of gates required will be $O(n^2 \zeta_{k})$.  When we use the measure and reset approach (e.g. for the first projection onto the complex), the procedure of applying the projector succeeds when repeated $1/\zeta_k$ times. The projectors together require $O(n^2)$ gates, while the depth remains $O(n)$, and the time complexity for a projection will be $O\left(\frac{n}{\zeta_k}\right)$ (for the first projection) and $O(n)$ when we use $O(n^2)$ ancilla qubits for qubitization.
Therefore, the total time complexity of our algorithm is
\[
O\left(\dfrac{1}{\epsilon^2}\max\left\{\dfrac{n\log(1/\epsilon)}{\sqrt{\delta}},\dfrac{n}{\zeta_k}\right\}\right).
\]
Supposing $\delta_k$ is the spectral gap of $\Delta_k$ and $\tilde{\Delta}_k = \tfrac{\Delta_k}{n}$, then $\delta = \tfrac{\delta_k}{n}$. The best-known classical algorithm for Betti number estimation of order $k$ has a time complexity of $O(\mathrm{poly}(n^k))$~\citep{gyurik2020towards} or $O(n^{1/\delta \log(1/\epsilon)})$~\citep{apers2022simple}.
Thus, the QTDA algorithms can achieve superpolynomial to exponential speedups over the best-known classical algorithms whenever we have: 
\begin{itemize}
    \item {\bf Simplices/Clique dense complexes}~--~the given complex $\Gamma$ is simplices/clique dense, i.e., $\zeta_k$ is large or $|S_k| \sim O(\mathrm{poly} (n))$;
    \item  {\bf $O(1/\textrm{poly}(n))$ spectral gap}~--~the spectral gap between zero and nonzero eigenvalues of $\Delta_k$ is not exponentially small, i.e., $\delta$ of $\tilde{\Delta}_k$ is $O(1/\textrm{poly}(n))$~\citep{apers2022simple}; and
    \item {\bf Large Betti number}~--~the Betti number $\beta_k$ (and the ratio $\beta_k/|S_k|$) needs to be  large so that a large $\eps$ suffices to estimate it to a reasonable precision.
\end{itemize}
A few examples for simplicial complexes that satisfy these conditions are discussed in the Appendix. Further examples and discussions on the potential speedups for quantum TDA algorithms are presented in~\citep{schmidhuber2022complexity}.

We wish to remark that known examples of simplicial complexes with exponentially many holes (Betti number) are limited. An example family of graphs with exponentially many high-dimensional holes are presented in~\citep{fendley2005exact}. More importantly, our algorithm still likely achieves exponential advantage over known classical approaches in efficiently answering the question: \emph{does the given simplicial complex have exponentially many holes or not?}
In that regard, our algorithm is indeed applicable to \emph{non-handcrafted high-dimensional classical data}.

\begin{figure}[tb!]
    \centering
       \includegraphics[width=1.0\textwidth, trim={0cm 0cm  0cm  0cm },clip]{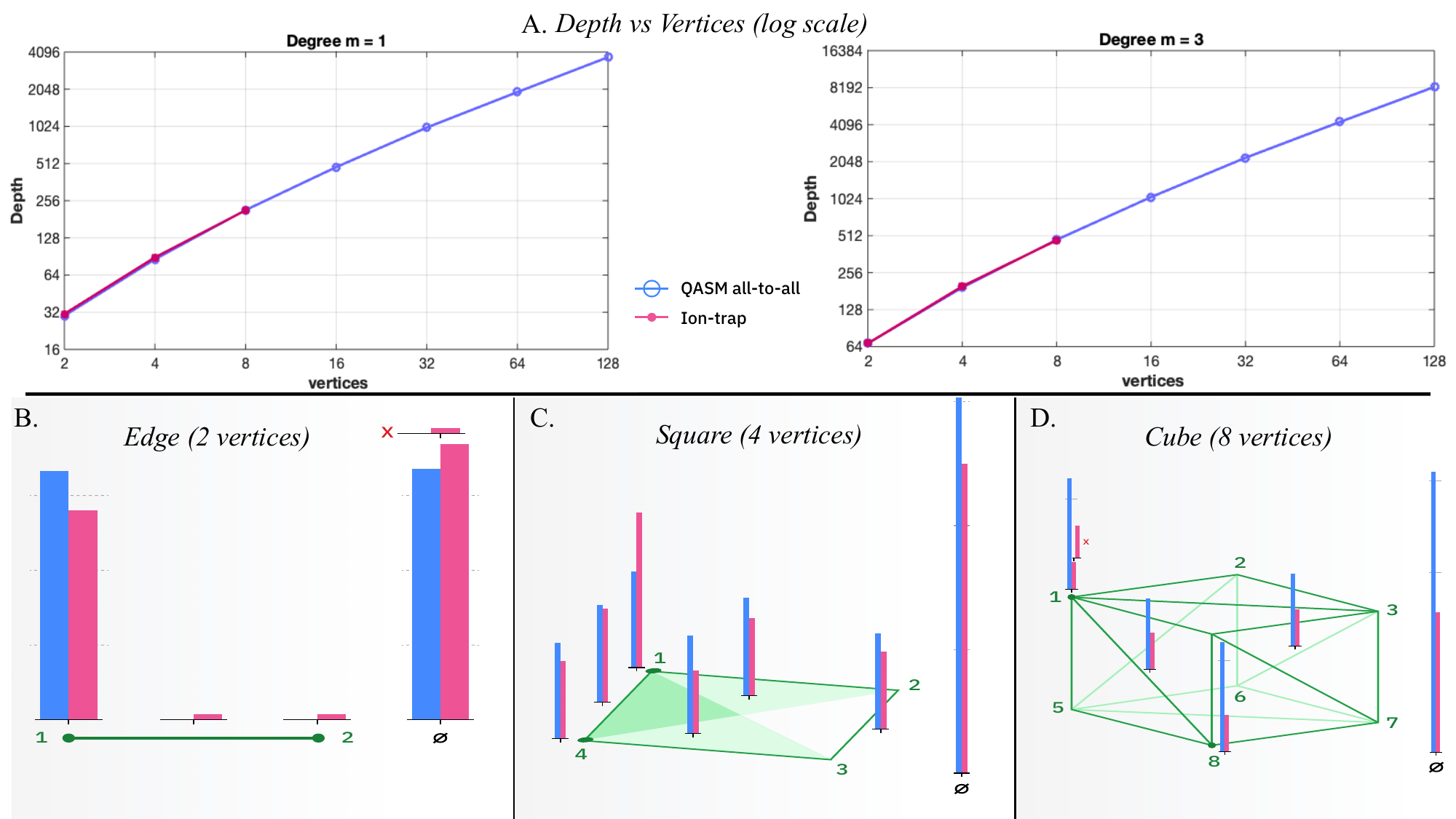}
    \caption{Results from real hardware of Laplacian applications (using measure and reset projections): A. Circuit depth versus the number of vertices for degree $m=1$ and $3$; (B., C. and D.) Histograms of the probability measurements as obtained from the hardware (right, magenta bars) and from a simulator (left, blue bars) for three different datasets namely, an edge (2 vertices), a square (4 vertices), and a cube (8 vertices). $\phi$ defines the null state, and  `X'  denotes the probability mass with incorrect flag readings.}
    \label{fig:realHW}
    \vskip -0.1in
\end{figure}

 From a different point of view, the Chebyshev moments capture the spectral information of $\Delta_k$ and have even more information than the Betti numbers. This therefore opens the door for these ($\PG$-corrected) noisy moments to be used directly as input features in downstream contexts such as machine learning classification, further relieving the depth and noise requirements of \NQTDA.


\section{Experimental Results}

With the theory promising short depths, it remains to demonstrate that \NQTDA\ is sufficiently noise-robust for quantum advantage to be achieved for the actual depths in realizable hardware and under realistic noise levels. Currently optimized classical TDA algorithms cannot compute all Betti numbers for 64 generic vertices (we have empirically verified with a popular public package called \textit{GUDHI}~\citep{maria2014gudhi}). Hence, quantum advantage could possibly be achieved when running \NQTDA\ on 64 vertices. Such large \NQTDA\ circuits are also beyond what is simulatable classically \citep{pednault2017pareto,pednault2019leveraging}.

We first present the actual depths needed in the form of a depth versus number of vertices plot, which also empirically confirms that circuit depth grows linearly with the number of vertices. Figure~\ref{fig:realHW} shows depths for both actual quantum hardware circuits and generic all-to-all quantum simulator circuits. For the quantum hardware, we employed the public-cloud accessible `H1' 12-qubit trapped-ion quantum computer from Quantinuum (powered by Honeywell)~\citep{quantinuum2022}.  We selected the most conservative number of edges to cover the worst-case depth scenario.
The magenta solid points of sub-figure A correspond to Laplacian circuit depths obtained from Quantinuum’s own native compiler, and the blue circled points correspond to those obtained from a quantum simulator. We observe that the circuit depth for the Laplacian scales \emph{linearly} with respect to the number of data points. 
As discussed earlier, our algorithm can be implemented on quantum computers with \emph{linear qubit connectivity} as well, using the sorting network approach~\citep{beals2013efficient}.

The remaining three sub-figures (B, C, D) present the histograms of the top probability measurements for different numbers of vertices (2, 4, 8) for both hardware runs (right, magenta bars) and simulation runs (left, blue bars). These measurements are the raw outputs of the quantum circuit before being converted into expectations (where flag values play a role). The respective complexes chosen correspond to easily understandable shapes (edge, square, cube) represented by the (green) edges. The input simplex set corresponds to a uniform superposition over all simplices (including not shown triangles, tetrahedrons, and all higher-order polytopes). Due to projection onto the specified complexes and interference (correctly eliminating boundaries, sending mass to the null state $\phi$), not all simplices will appear/remain after the application of the Laplacian, demonstrating that the hardware is truly performing a coherent quantum calculation. These sub-figures clearly show that there is agreement between hardware and noise-free simulations on which simplices receive the top probability measurements. Three types of errors are, however, visible by the hardware: reduced probability mass for correct simplices, some small probability mass on incorrect simplices (also not shown are the non-top measurements), and correct simplex mass but incorrect flag readings (marked with a red 'X' in sub-figures~B and D). Even with these errors, Figure~\ref{fig:realHW} unequivocally demonstrates that at real-world noise-levels there is sufficient coherence to reproduce the correct interference at the depths of these circuits.

\begin{figure}[tb!]
    \vskip -0.1in
    \centering
      \includegraphics[width=0.42\textwidth]{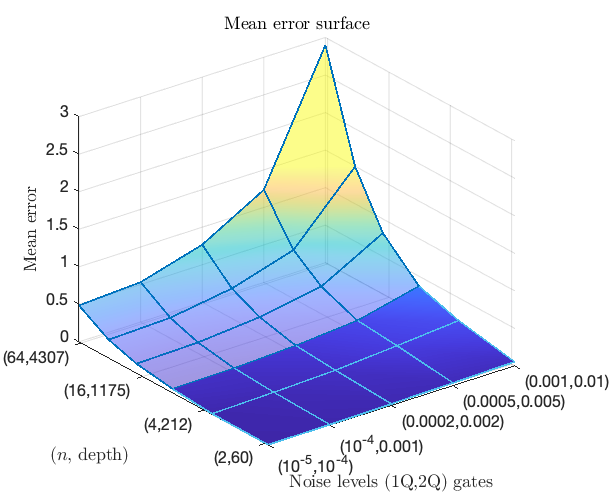}\qquad
          \includegraphics[width=0.42\textwidth]{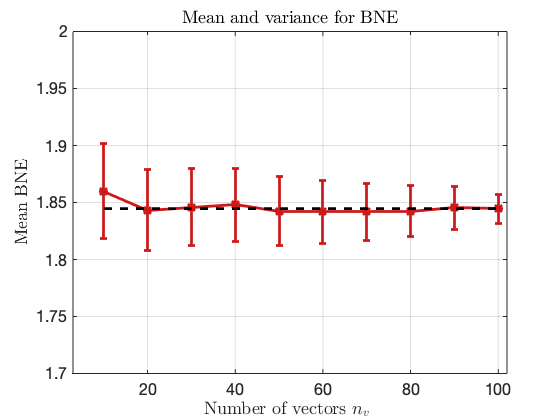}
    \caption{Results from noisy simulations: A. Mean error surface as a function of the noise levels in (1-qubit, 2-qubits) gates and (number of vertices $n$, circuit depth).  B. Mean and the variance of the Betti number estimated as a function of the number of random vectors $\nv$ with $n=8$ vertices, degree $m=5$ and the noise-level: $(0.001, 0.01)$.}
    \label{fig:noisysim}
    \vskip -0.1in
\end{figure}

The next task would be to demonstrate that these errors, which inevitably enter, do not dramatically disturb the downstream Betti number calculation. For this we chose complexes with large eigenvalue gaps, and  sufficiently many random vectors and shots. The Chebyshev parameters we selected are such that, in the noise-free scenario, the algorithm would calculate the Betti number almost perfectly (i.e., with a mean error of close to zero). Thus, any Betti-number errors involving the noisy simulations are due mainly to the quantum noise and only minimally on downstream classical approximations. In this setup, the error that can naturally be considered tolerable is 0.5, since any error less than 0.5 rounds to exactly the correct Betti number.
In Figure~\ref{fig:noisysim}, we present results from extensive noisy quantum simulations of the non-qubitized version  of the algorithm. The right plot shows the mean and the variance (as error bars) of the Betti number estimated as a function of the number of random vectors $\nv$. We note that the mean converges to $\sim1.84$ (the true Betti number is $\beta_0 = 2$) and, most importantly, the variance reduces as we increase $\nv$. This variance-reduction mitigates errors due to shot noise and  randomness in the trace estimation,  illustrating the precision-versus-number-of-trials benefit of \NQTDA.
In the left figure, we present the mean error surface plot for Betti number estimation, as a function of noise levels (chosen triples of measurement, one and two-qubit gate errors) and number of vertices (with concomitant circuit depth). The first number of the listed noise-level pair corresponds to the one-qubit error probability. The measurement and the two-qubit error probabilities are both set to the second value. In the surface plot, the solid region (for $n=2$ to $n=8$) corresponds to actual noisy simulations and the translucent region (from 16 to 64 vertices) corresponds to an extrapolation of the surface for larger $n$, which we cannot simulate classically (even $n=16$ was not simulatable using a large classical machine with 2 GPUs). The surface plot extrapolations provide the minimum noise-level requirements for \NQTDA\ to successfully run on future larger NISQ devices. See Appendix~\ref{sec:addres} for additional results, including preliminary results on cosmic microwave background (CMB) data.

\section{Conclusions}
The true potential of TDA for machine learning has been severely limited because of the computationally prohibitive requirements of classical algorithms.
To address this critical issue and revive the potential of TDA as a viable machine learning approach, 
we presented a new quantum algorithm for Betti number estimation with comprehensive error and complexity analyses. This is one of the \emph{first} quantum machine learning algorithms with short depth and potential significant speedup under certain assumptions. Our algorithm neither suffers from the data-loading problem nor does it likely require fault-tolerant coherence for even mid-size datasets. The algorithm fits the hybrid quantum-classical scheme but within a recently developed randomized-approximation  framework. The implementation and successful execution of the entire algorithm on real quantum hardware and noisy simulations was demonstrated, illustrating noise-resiliency at realistic noise-levels. 
These advantages imply that this algorithm may be one of the few  noise-robust  quantum algorithms capable of  performing an important and useful AI task on near-term (non-fault tolerant) quantum devices, beyond the reach of classical computation. Possible future research directions include: improvements to the algorithm in order to  efficiently deploy it on sparsely connected quantum devices; achieving  substantial asymptotic speedups under more general settings; and identifying interesting domain problems for which \NQTDA\ can be employed for practical purposes.

\subsubsection*{Acknowledgements}
This research was supported in part by the Air Force Research Laboratory (AFRL) grant number FA8750-C-18-0098, and in part by IBM Research, South Africa under the Equity Equivalent Investment Programme (EEIP) of the government of South Africa.
VJ is supported in part by the South African Research Chairs Initiative of the National Research Foundation, grant number 78554.
Firstly, we wish to thank Brian Neyenhuis, Jennifer Strabley,  Chad Edwards, and Tony Uttley  from the Quantinuum quantum team for generously providing us credits to access their quantum computers and assisting with the experimentation. Secondly, we would like to thank Adam Connolly and Julien Sorci for pointing out a flaw in the algorithm of an earlier version of the paper.
Next, we would like to acknowledge Tal Kachman for the suggestion to use controlled-increment to entangle the simplices with the count register.
The authors would also like to thank Scott Aaronson, Paul Alsing, Ryan Babbush, Sergey Bravyi, Chris Cade, Marcos Crichigno, Aram Harrow, Gil Kalai, and Seth Lloyd
for valuable discussions.

{
\bibliographystyle{iclr2024_conference}

}
\newpage

\appendix
\begin{center}
\large{ \textbf{Topological data analysis on noisy quantum computers \\ Supplemental material}}
\end{center}


Here in the appendix, we present numerous details related to {NISQ-TDA}, including background information, fleshed out details related to the novel ideas in the proposed algorithm, theoretical analyses and proofs, and additional numerical simulation results. 
\section{Details on quantum algorithms, TDA, and QTDA}\label{app:prelims}
We begin with a list of desiderata of a quantum algorithm, and  technical background details  related to TDA and QTDA, including several applications and use cases.

\subsection{Desiderata of a quantum algorithm}

An end-to-end algorithm that successfully achieves such quantum advantage on near-term (or even currently available) devices and has practical, real-world and commercial value must arguably satisfy a set of desiderata. Similar to the famous seven DiVincenzo criteria~\citep{divincenzo2000physical} for \emph{quantum hardware}, we propose a seven-criteria counterpart for \emph{quantum software} that such a quantum algorithm needs to satisfy, as depicted in Figure~\ref{fig:qc}.  Although quantum computers can operate on large (exponential) computational spaces, a  well-known quantum information bottleneck is the so-called data-loading problem~\citep{aaronson2015read}, in that  if the process of loading the data requires exponential time, then the subsequent quantum computational benefits are over-shadowed by this loading time, and do not convey any asymptotic advantage. Indeed, information theory precludes the possibility of lossless exponential compression for generic unstructured data. Therefore, the data size must be small compared to the computational space, and hence, the first criterion is that the algorithm should be effective on \emph{small arbitrary classical input data}. Many quantum machine learning (QML) algorithms~\citep{biamonte2017quantum,schuld2015introduction,schuld2019quantum} suffer from this inherent data-loading issue. The term ``arbitrary" conveys the  desirable property of the general applicability of the algorithm to non-handcrafted classical input data and excludes the  data access through an oracle assumptions. A few recent works~\citep{havlivcek2019supervised,liu2021rigorous,huang2022quantum} have demonstrated quantum speedups for different machine learning tasks, but they are restricted to specially tailored classical data, or data from well-defined quantum experiments.

\begin{figure}[tb!]
    \centering
    \includegraphics[width=1.0\textwidth]{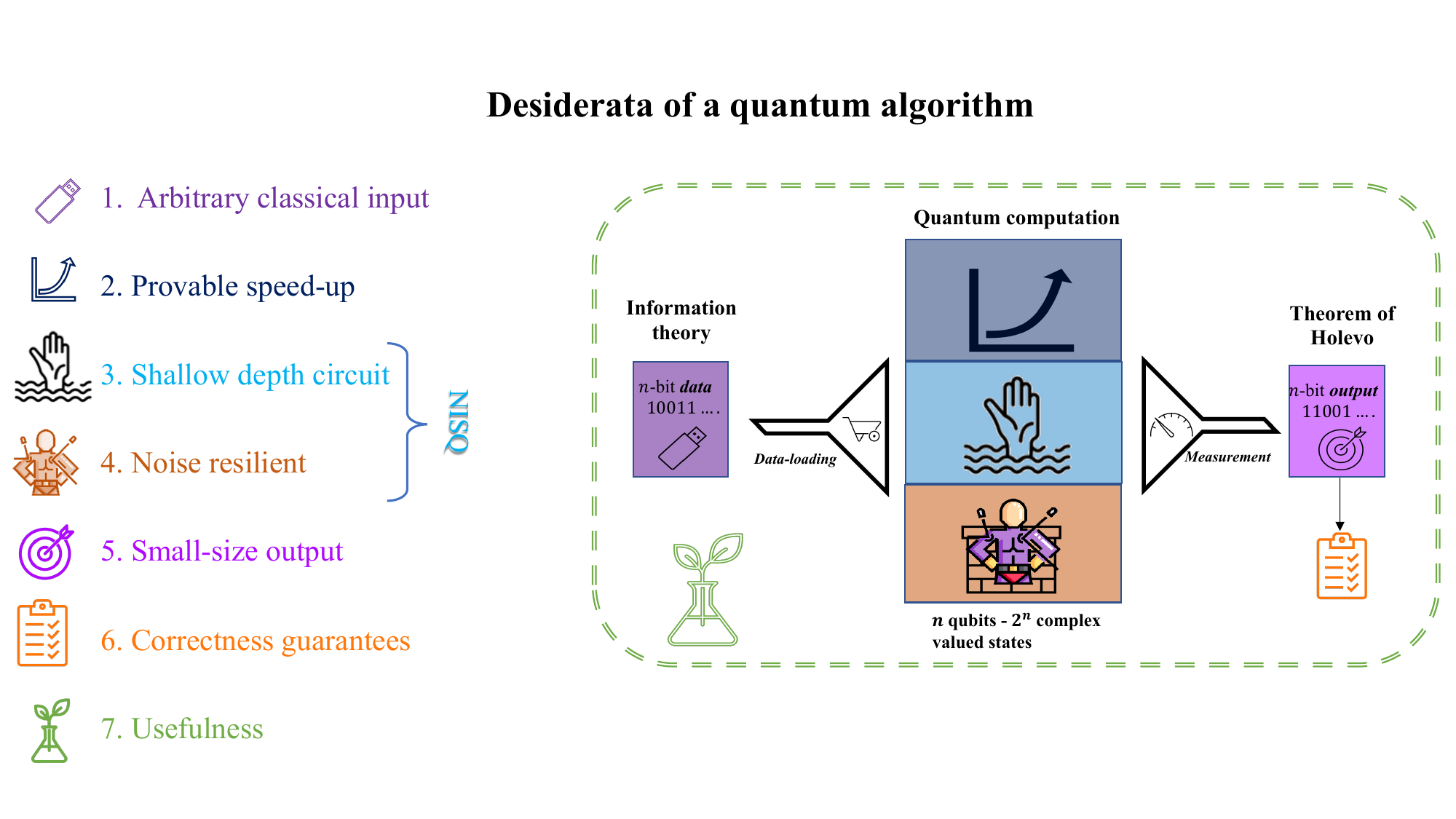}
    \caption{The armed quantum computing advantage race: a set of seven criteria that an aspired quantum algorithm should satisfy (left), and the end-to-end quantum computational model (right), where the funnels symbolize the narrow input and output bottlenecks and the boxes represent the other desired properties. }
    \label{fig:qc}
\end{figure}

The second requirement of the algorithm is that it should achieve significant asymptotic  (ideally exponential) speedup~\citep{knill1998power,watrous2012quantum}
on a provably\footnote{By ``provably" we mean the problem is \emph{provably} as hard as a computational class that is widely accepted to be classically hard~\citep{watrous2012quantum}.}  hard problem.
Recently, many hybrid classical-quantum algorithms have been proposed, including variational quantum eigen-solvers (VQE)~\citep{peruzzo2014variational} and the quantum approximate optimization algorithm (QAOA)~\citep{zhou2020quantum}, that heuristically exploit the exponential computation space. However, these algorithms do not achieve a provable speedup~\citep{cerezo2021variational}.
Moreover, recent developments of ``dequantized"  algorithms~\citep{chia2020sampling,chepurko2020quantum,tang2021quantum} have reduced the
potential speedups of many of the linear-algebraic QML proposals~\citep{rebentrost2014quantum,biamonte2017quantum,schuld2015introduction,schuld2019quantum} to be between a nil and at most a modest polynomial.

There are, of course, many quantum algorithms that provably  achieve
polynomial-to-exponential speedups over the best-known classical methods~\citep{shor1994algorithms,grover1996fast,nielsen2010quantum,  gilyen2019quantum}. 
However, the majority of these algorithms require \textit{fault-tolerant} quantum computers~\citep{shor1996fault, aaronson2015read,preskill2018quantum}, which are error-corrected quantum systems needing a considerably large overhead in resources (number of low-noise qubits and operations)~\citep{arute2019quantum,zhao2020measurementreduction}.  Fault tolerance has not yet been achieved at scale on currently available quantum devices, and it is likely several years away from full realization.
Intriguingly, the qubit numbers, coherence times and noise levels that are currently realized in hardware are not classically simulatable \citep{pednault2017pareto,pednault2019leveraging}, which raises the question of whether some algorithm could make use of these non-fault-tolerant noisy devices (Noisy Intermediate-Scale Quantum (NISQ)~\citep{preskill2018quantum}) for quantum advantage. Hence, the third and fourth requirements of the ideal algorithm are: it should compile to a \emph{short-depth circuit}, allowing execution within the achievable coherence times, and it should be \emph{noise resilient}, thereby tolerating the inevitably introduced  noise  with each operation.

The fifth requirement demands that the \emph{output should be small in size}. This represents another information theory bottleneck of quantum computing, where Holevo's theorem~\citep{holevo1973bounds} suggests we can only read up to $n$ classical bits from an $n$-qubit device, even though the output quantum state is exponential in size. Many QML and other algorithms  suffer also from this issue~\citep{aaronson2015read}. Another desirable property of the output forms our sixth desideratum, namely that the output of the algorithm should have \emph{correctness guarantees}.
This could be achieved by algorithms with a certain probability of success followed by a procedure to classically verify the correctness of the output, or by algorithms that converge to the correct solution with  \emph{statistical error guarantees}.  

\begin{figure}[tb!]
    \centering
    \includegraphics[width=1.0\textwidth,trim={0.cm 0cm  0.cm  0.cm },clip]{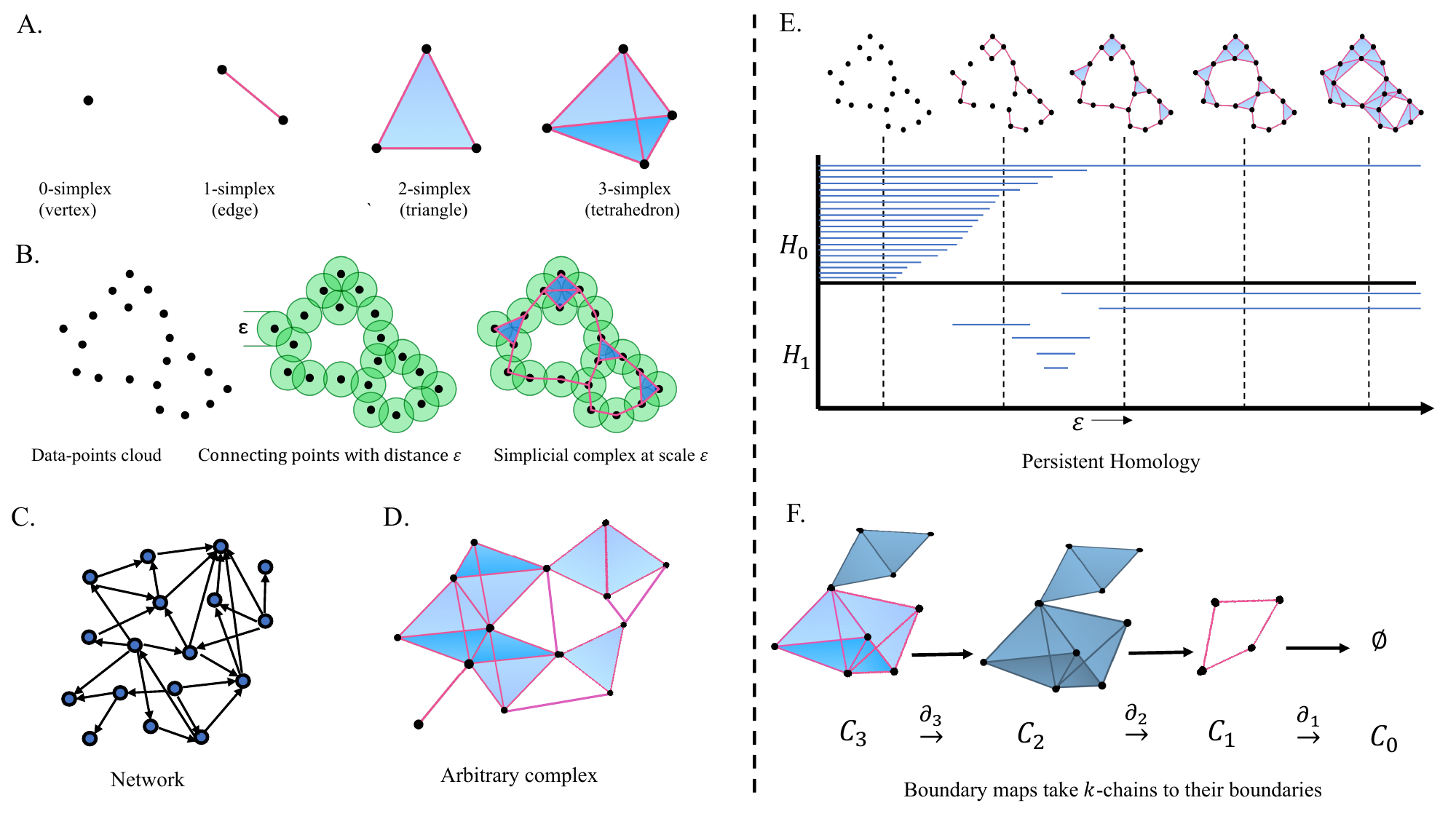}
    \caption{A: $k$-simplices or $k$-chains (fully connected sets of $k$+1 data points) shown for $k=0,1,2,3$. B: Point cloud of raw data (left);  Points can be connected using any arbitrary distance metric $\veps$ (middle), i.e., edges are inserted between points that are within $\veps$ of each other (alternatively, the data could already come with edge information);  higher-order $k$-simplices are created for every $k$-clique (right).
    C. Input data represented as a graph or network, or is given as (D) an arbitrary complex.
    E: Persistent homology, where the top region shows the edge connections and simplices at different scales, and the bars represent the formation and cessation of connected components ($H_0$) and 2D holes ($H_1$); The number of bars at a given scale $\veps$ equals the Betti numbers $\beta_0$ and $\beta_1$, respectively. F. Chain complex and homology: Sequence of chain groups connected by boundary operators that map $k$-chains to their boundaries. }
    \label{fig:tda}
\end{figure}

The final requirement of the algorithm is that it should solve a \emph{useful problem} of real-world applications. The algorithm should be  end-to-end and solve a problem with practical use-cases. Recent quantum advantage results~\citep{arute2019quantum, zhong2020quantum, madsen2022quantum} fall short with respect to this crucial requirement.

Next, we discuss how our algorithm fares against the set of seven criteria introduced in this paper that should be satisfied by an aspired quantum algorithm. \\
\textit{Criterion 1} is satisfied with respect to arbitrary small classical input in that the inputs to our algorithm and the quantum computer are just edges between the data-points, with up to $n/2$ edges considered in parallel, and the input can be any dataset or complex defined by its vertices and edges. Moreover, the data is not stored on the quantum computer explicitly, and is input each time the projector $\PG$ is applied.\\
\textit{Criterion 2}, that of  \NQTDA\ achieving provable asymptotic speedups is likely satisfied for certain instances and classes of complexes (for which the problem is DQC1-hard) as previously discussed; also refer to the discussions in~\citep{marcos2022,schmidhuber2022complexity}.\footnote{We note that recent results~\citep{marcos2022,schmidhuber2022complexity} have shown the problem of estimating exact Betti numbers of clique complexes to be hard (QMA-1 and NP hard)  even for quantum computers, and thus \NQTDA\ is unlikely to achieve quantum advantage for estimation of exact Betti numbers of clique complexes.}
Since  the \NQTDA\ algorithm only requires the complex to be defined by its vertices and edges, one may consider estimation of Betti numbers of other classes of complexes~\citep{schmidhuber2022complexity}, such as abstract simplicial complexes, Erdos-Renyi complexes and other possible chain complexes. For example, the cube ($n=8$ instance) considered in our experiments is not a Vietoris-Rips complex (in that case, the cube would have been filled in), but offers one with an example of a complex with a 3D hole with just 8 vertices. Our algorithm should achieve superpolynomial to exponential speedups for complexes with very large number of simplices ($|S_k| \sim O(\mathrm{poly} (n))$), very large Betti number ($\beta_k \approx |S_k|$), and large spectral gap ($1/\sqrt{\delta} \sim O(\sqrt{n})$).  \\
\textit{Criterion 3} is satisfied since the algorithm compiles to a short depth quantum circuit that is implementable on present-day and near-term quantum devices, where the circuit depth will be data dependent (depending on the number of edges, and also on the spectral gap for the polynomial degree). Preliminary experiments show linear scaling of the circuit depths for a fixed polynomial degree.\\
\textit{Criterion 4} seems to be satisfied as suggested by the preliminary simulation and hardware implementation results, which show that \NQTDA\ is noise-resilient. The noise-resiliency stems from two factors: (a)  repeated data loading through the projectors $\PG$; and (b) the stochastic trace estimator.
Each application of $\PG$ inputs the clean data information afresh, correcting the errors introduced due to noise up to the previous step. The stochastic trace estimator estimates the trace by averaging over many random samples, hence averaging out the noise effects. \\
\textit{Criterion 5} of small output is naturally satisfied in that the algorithm output is an estimate of the normalized Betti number, and the outputs measured from the quantum computer are the moments. \\
\textit{Criterion 6} is satisfied, as we presented error guarantees for \NQTDA\ and discussed the conditions under which the algorithm will likely achieve speed up over classical algorithms. \\
\textit{Criterion 7} is likely satisfied in that it can certainly solve interesting useful problems, such as estimating whether or not a given complex has exponentially many holes and extracting spectral information of high-order Laplacians, with substantial speedups over known classical approaches, even if it is unclear whether \NQTDA\ will achieve substantial speedups on arbitrary data for Betti number estimation;
some specific useful applications are discussed below.


\subsection{Topological Data Analysis}\label{ssec:tda}

Figure~\ref{fig:tda} illustrates the key concepts of TDA.
A  $k$-simplex or a $k$-clique or $k$-chain is a collection of $k+1$ fully connected points (see Figure~\ref{fig:tda}(A)),
and a simplicial complex (clique or chain complex) is a collection of such (nested) simplices (Figure~\ref{fig:tda}(B)).
Formally, an abstract simplicial complex $\Gamma$ (Figure~\ref{fig:tda}(D)) is a set of simplices satisfying: (a) if a simplex is in $\Gamma$, then all its faces are in  $\Gamma$, and (b) intersection of two simplices in  $\Gamma$ is through a face of each of them. A clique complex is typically defined for a graph, and is a simplicial complex with the $k$-simplices given by the $k$-cliques of the graph. A simplicial complex also defines a collection of abelian groups. 
A chain complex is a sequence of modules (groups) connected by boundary operators (defined later); see Figure~\ref{fig:tda}(F). 

Homology provides us with a linear-algebraic tool to extract, from simplicial complexes derived from data, values that
describe the shape of the data, such as the number of connected components (clusters), tunnels (e.g., a doughnut shape), holes (as in Swiss cheese cavities), or higher-dimensional voids, together known as the Betti numbers~\citep{ghrist2008barcodes} (see Figure~\ref{fig:tda}(C)). As the $\eps$-distance (Figure~\ref{fig:tda}(B)) is varied the simplicial complex changes, altering the Betti numbers (Figure~\ref{fig:tda}(C)). The pattern of changing Betti numbers (\textit{persistent homology}~\citep{ghrist2008barcodes}) provides a topological characterization of the data distribution that is scale-independent, invariant under rotation and translation, and robust under variations due to data representation, data sampling, and data noise. These Betti numbers are defined in terms of a Laplacian matrix ($\Delta_k$), which takes a set of simplices and returns an \textit{oriented} sum of all simplices connected via the \textit{boundary} simplices of the input. The signed orientations allow simplices to cancel out if they encompass a hole, where the hole-surrounding boundaries form the \textit{kernel} set of the Laplacian. The Betti numbers are precisely the size of these kernels of various simplicial orders. 
The Laplacian can be very large for high-order $k$, and classical algorithms for Betti number calculation become intractable even for $k\geq 3$.
In this paper, we present a new quantum representation of the Laplacian separating the boundary action from simplicial complex construction enabling implementation on near-term devices.

\subsection{QTDA Algorithm}\label{ssec:qtda}
The seminal approach of \lloyd to estimate the Betti numbers using quantum computers, which was further analyzed by
\citep{gunn2019review}
and
\citep{gyurik2020towards}, 
comprises two main steps.
The first step of the algorithm is to create a mixed state $\rho_k$ over the states $\ket{s_k}$ of $k$-simplices (over $\tH_k$) that are in the  complex $\Gamma$.
The second step is to use Hamiltonian simulation (of the boundary operator or the Laplacian $\Delta_k$) and  quantum phase estimation (QPE) with $\rho_k$ in the input register (repeatedly projecting simplices from the complex onto the kernel) to estimate the kernel dimension of the Laplacian.

In order to prepare the maximally mixed state $\rho_k$ as part of the first step,
the QTDA algorithm first uses Grover's search algorithm~\citep{boyer1998tight} to construct the $k$-simplex state 
\[
\ket{\psi_k} = \frac{1}{\sqrt{|S_k|}}\sum_{s_k\in S_k}\ket{s_k},
\]
for the set $S_k$ with $|S_k| = \dim\tH_k$. Then, the mixed state 
\[
\rho_k =  \frac{1}{|S_k|}\sum_{s_k\in S_k}\ket{s_k}\bra{s_k}
\]
can be prepared from $\keti{k}$ by applying the CNOT gate to each qubit and tracing out into the ancilla zero qubits. The time complexity of this step is $O\left(\frac{k^2}{\sqrt{\zeta_k}}\right)$, where 
$\zeta_k := \frac{|S_k|}{\binom{n}{k+1}}$ is the fraction of $k$-simplices that are in the complex $\Gamma$. The number of gates 
required for this step is $O\left(kn^2+\frac{nk}{\sqrt{\zeta_k}}\right)$~\citep{gunn2019review}. We believe this step is unnecessary, because a random simplex of order $k$ can be drawn from the complex efficiently. Nevertheless, the same Grover's search is needed to restrict the boundary operator to the complex in the next step.

The second step uses QPE to estimate the kernel dimension of the Laplacian $\Delta_k$. For this, the following \emph{Dirac operator} (the square root of the generalized Laplacian)
\begin{equation}
\tilde{B} = 
    \begin{pmatrix}
0 & \tpar_1 & 0 & \cdots & \cdots & 0\\
 \tpar_1^{\dagger} & 0 & \tpar_2 & 0 & \cdots & 0\\
0 &  \tpar_2^{\dagger} & 0 & \ddots  & \cdots & 0\\
\vdots & \vdots &\ddots &\ddots &\ddots &\vdots \\
\vdots &\vdots &\vdots &\ddots & 0  &\tpar_{n-1} \\
0 & 0 & 0 & \cdots & \tpar_{n-1}^{\dagger}& 0\\
\end{pmatrix}
\end{equation}
is first simulated such that  $\tilde{B^2} = \bdiag\left[\Delta_1, \ldots, \Delta_n\right]$  is a block diagonal matrix\footnote{The block diagonal form is obtained in the Hamming weight sorted representation of the simplices.}, since $\tpar_k\tpar_{k+1}= 0$. 
Given that $\tilde{B}$ has the same nullity (kernel) as $\tilde{B}^2$, the idea is to use Hamiltonian simulation of $\tilde{B}$ (i.e., implement $U = e^{i\tilde{B}}$), and use QPE with $\rho_k$ (computed in the first step) as the input state to estimate its eigenvalues. Since $\tilde{B}$ is an $n$-sparse Hermitian with entries $\{0,\pm1\}$, it is claimed that this can be simulated using $O(n)$ qubits and $O(n^2)$ gates~\citep{low2017optimal}.\footnote{The serious issue is that the restricted Dirac operator is not on hand, and requires $\tilde{P}_k$ to be known; see Remark \ref{rem:1}.}

QPE yields an approximate estimate of the eigenvalues of $\Delta_k$. We need to scale $\Delta_k$ such that its spectrum is in the interval $[0,1]$, in order to avoid multiples of $2\pi$; see Section~\ref{sec:analysis} for details on scaling. Supposing the smallest nonzero eigenvalue of (the scaled) $\Delta_k$ is greater than $\delta>0$, we then need to estimate the eigenvalues with a precision of at least $\frac{1}{\delta}$ in order to distinguish an estimated zero eigenvalue from others.
Therefore, the time complexity of this step is $O(\frac{n^2}{\delta})$ and requires as many gates for its implementation. 

This use of QPE provides us with an approximate estimate of some random eigenvalue of $\Delta_k$. For BNE with additive error $\eps$, we need to repeat the two steps $O(\eps^{-2})$ times. Hence, the total time complexity of QTDA (original \lloyd version\footnote{Except that we adjust the cost of QPE under our spectral interval assumption.}) for BNE with $ \left| \chi_k - \frac{\beta_k}{\dim\tH_k}\right| \leq \epsilon$ is given by
\[
O\left(\frac{n^4}{\epsilon^2\delta\sqrt{\zeta_k}}\right).
\]

\begin{remark}[Time Complexity Discrepancy]\label{rem:1}
We note that there is a discrepancy in the total time complexity of the QTDA algorithm reported in~\citep{lloyd2016quantum} and in the subsequent articles by~\citep{gunn2019review} and~\citep{gyurik2020towards}, primarily due to differences in the underlying assumptions.
This relates to simulation of the matrix  $\tilde{B}$ or $\Delta_k$, where \lloyd suggest the requirement of constructing and applying the projector $\tilde{P}_k$ at each  round (possibly using Grover's search algorithm, although some implementation details are missing). Hence, the total time complexity in~\citep{lloyd2016quantum} is a product of the time complexity of the two steps. In contrast, the follow-up studies by
\citep{gunn2019review} 
and 
\citep{gyurik2020towards} 
assume that we have access to $\tilde{B}$ or $\Delta_k$ as an $n$-sparse matrix, in order to simulate it in the second step, and therefore the time complexities of the two main steps are added in~\citep{gunn2019review,gyurik2020towards} to obtain the total computational time complexity.
\end{remark} 

The subsequent articles by
\citep{gunn2019review}
and 
\citep{gyurik2020towards} 
do not address the issue of efficient quantum construction of $\tilde{B}$ or $\Delta_k$ from the pairwise distances of the $n$ points, and assume that oracle access is given to the nonzero entries of $\tilde{B}$ and their locations.

\subsection{Potential use-cases}
The proposed \emph{NISQ-TDA} method is advantageous for computing the Betti numbers of simplices-dense complexes, especially higher order Betti numbers (the regime where classical methods fail).
Here, we briefly discuss some of the potential applications where \emph{NISQ-TDA} can be extremely useful and even revolutionary.

\paragraph{Neural networks.}
The training and application of artificial Neural Networks (NN)
is a key methodology of artificial intelligence,
and understanding the processing and capabilities of neural networks
is critical to that methodology. 
TDA has been proposed for the analysis of neural networks;
in one such proposal \citep{naitzat2020topology}, the dataset points
are considered to comprise a sample of a surface and the
\emph{topological complexity} $T_0$,
considered here to be the sum the
Betti numbers,
is estimated for the dataset surface. Each layer $\ell_i$
computes a transformation of its input, and so yields a sample
of a surface, whose topological complexity $T_i$ can also be estimated.
In the analysis of \citep{naitzat2020topology},
the resulting sequence $T_0, T_1,\ldots, T_h$, for a trained network
with $h$ layers, is found to decrease rapidly, so that network
processing can be understood as a sequence of topological
simplifications. By considering the nature and rapidity of this simplification,
insights into network architecture and processing are obtained.
Another application of TDA \citep{guss2018characterizing}
considers the question of the appropriate width and height
of networks, via the estimation of the capacity of a network
as a function of the network height and maximum width.
Here the network
is used for classification, dividing the input space into positive
and negative instances, and with the \emph{decision boundary} between these two sets.
The capacity of the network is considered to be the maximum topological complexity
of the decision boundary determined by the network. 
 TDA can be applied to the analysis of decision boundaries. 
Most importantly, we note that where both \citep{naitzat2020topology} and \citep{guss2018characterizing} consider only small
Betti numbers, and mainly low-dimensional datasets,
further insights could be obtained via estimation
of higher Betti numbers of higher-dimensional
data, reachable via
\NQTDA.

\paragraph{Cosmic microwave background.} Next, we consider the CMB application,  a potential use-case of \NQTDA, for further discussion due to three favourable reasons. The first is due to the immense scientific and cosmic value that is to be gained from a meticulous study of the available and prospective high-quality CMB data, including the testing of theories of fundamental physics, the constraining of the fundamental constants of the universe, and even the probing of the existence of parallel and past universes. Secondly, the CMB use-case has been extensively studied from a classical TDA perspective. Indeed, promising results have already been empirically demonstrated for low-order Betti numbers~\citep{Cole:2017kve,Biagetti:2020skr}. A few papers have even produced convincing theoretical results demonstrating the power of Betti numbers to shed light on the CMB use-case, in particular
\citep{feldbrugge2019stochastic} 
have derived analytic expressions for low-order Betti numbers proving their usefulness. Thirdly,
\citep{adler2014crackle} 
have proven that low and, more importantly, high-order Betti numbers associated with a random sample of points generated from different probability distributions contain valuable characterizing properties.

We make the novel connection between the results in~\citep{adler2014crackle} and its application to the study of the CMB. As a proof of principle, we have implemented the insights from their paper and empirically demonstrated distinguishability of the studied distributions, experimenting with a small number of points to match the regime of interest for \NQTDA; see section~\ref{sec:cmbres} for these results. The preliminary results suggest the Betti numbers of a small sample set  can  be used for the detection of non-Gaussianity in the CMB data, and  \NQTDA\ can be used for the  estimation of  these  Betti numbers of all order.

\paragraph{Neuroscience.}  Topology-based methods have been used to detect interpretable structures from neural activity and connection data in Neuroscience~\citep{giusti2015clique}. Such structural features can yield key insights into neurological processes. In~\citep{giusti2015clique}, clique topology was used to extract invariant features from neural data that reveal geometric structures of neural correlations in the rat hippocampus. In particular, Betti curves, the distribution of the Betti numbers $\beta_k$ as a function of the edge density $\rho$, and integrated Betti values $\bar{\beta}_k$, defined as $\bar{\beta}_k = \int_{0}^1\beta_k(\rho)d\rho$, were considered as clique topology features and were computed from neural connectivity data. The article showed that the geometric signatures observed using these topological features in the neural correlations revealed many interesting insights. Neural connections have many unknown non-linearities, and  clique topology can extract such nonlinear features present in the connectivity data, presenting novel insights in brain activity and connectivity. 
Indeed, the clique topology features such as Betti curves and integrated Betti values considered in this application require us to compute the Betti numbers of many clique-dense complexes. Thus far, only small order Betti numbers have been considered due to computational impediments. Higher order clique topology features might reveal many new insights in neural connectivity.

\paragraph{Genetics.}
TDA is also playing an important role in various aspects of genetics. As one representative example, TDA has been exploited to address an important problem in the sequencing of a sample of a collection of genomes en masse when distinct organisms with very similar genomes are present~\citep{Parida1}. This problem arises in the investigation of the community of constituent microorganisms in a micro-environment where differentiating between similar organisms leads to many false positive identifications because of the numerous possible assignments of short sequencing reads to reference genomes. TDA is used to address this problem by extracting information from the geometric structure of data, where said structure is defined by relationships between sequencing reads and organisms in a reference database, resulting in separation of true positives from false positives and capturing the non-obvious structure defined by the reads indiscriminately mapping to multiple organisms~\citep{Parida1}.
As another representative example, TDA has been exploited to address an important problem in phenotype prediction as to whether RNA sequencing-based gene expression contains enough information to separate healthy and afflicted individuals~\citep{Parida2}. This problem is particularly difficult given the poor phenotype predictions from standard machine learning methods.  By taking into account the topological information and features relevant to classification from gene expression data, TDA is used to understand the shape of the very high-dimensional gene expression data and to obtain topological summaries of the gene expressions of subjects contained within this data, rendering a significant improvement in phenotype prediction of disease and confirming that gene expression can be a useful indicator of the presence or absence of a health condition~\citep{Parida2}.
In both of these examples, only small Betti numbers are considered, whereas significant insights should be obtainable using the higher Betti numbers within reach via \NQTDA.


\section{NISQ-TDA details}\label{ssec:Delrep}

Here, we present additional details related to different aspects of the proposed algorithm. The first key innovation of \NQTDA\ is the fermionic representation of the full boundary operator $B$. 

\subsection{Fermionic Boundary Operator}
Fermionic fields obey Fermi--Dirac statistics, which means that they admit a mode expansion in terms of creation and annihilation oscillators that anticommute.
Exploiting this fact, it is convenient to map Pauli spin operators to fermionic creation and annihilation operators.
The Jordan--Wigner transformation~\citep{jordan1928paulische} is one such mapping.
In this section, we will make use of it to express the boundary matrix. 

The \lloyd
restricted boundary operator given in \eqref{lloyd_boundary_operator} is not in a form that can be easily executed on a quantum computer, nor does it act on all orders, $k$, at the same time. In particular, it is a high-level description of the action of the boundary operator on a single generic $k$-dimensional simplex with the location of the ones assumed to be known. Furthermore, this representation is in tensor product form composed of quantum computing primitives that directly map to quantum gates in the quantum circuit model.
To begin, define the operator:
\begin{equation}
    Q^+ := \frac12 ( \sigma_x +\im\sigma_y) = \left( \begin{array}{cc} 0 & 1 \cr 0 & 0 \end{array} \right) \,.
\end{equation}
This allows  writing the \emph{full} boundary operator in terms of the above operator:
\begin{align}
\label{full_boundary}
\nonumber
        \partial^{(n)} & := \sigma_z \otimes \ldots \otimes \sigma_z \otimes Q^+ \\ \nonumber
        &  + \sigma_z \otimes \ldots \otimes \sigma_z \otimes Q^+ \otimes I \\ \nonumber
        &  \vdots  \\ \nonumber
        &  + \sigma_z \otimes Q^+ \otimes I \otimes \ldots \\ \nonumber
        &  + Q^+ \otimes I \otimes I \otimes \ldots \\
        &  = \sum\limits_{i=0}^{n-1} a_i \,,
\end{align}
where the $a_i$ are the Jordan--Wigner~\citep{jordan1928paulische} Pauli embeddings corresponding to the $n$-spin fermionic annihilation operators. This fermionic boundary map representation was  presented in~\citep{cade2021complexity} and~\citep{akhalwaya2022representation}.
For details on its correctness and $O(n)$-depth unitary circuit to construct it on a quantum computer, see~\cite{akhalwaya2022representation}.

\subsection{Projection operators}

\begin{figure}[tb!]
    \centering
\includegraphics[width=1.2\textwidth,trim={3cm 1cm  1cm  0cm },clip]{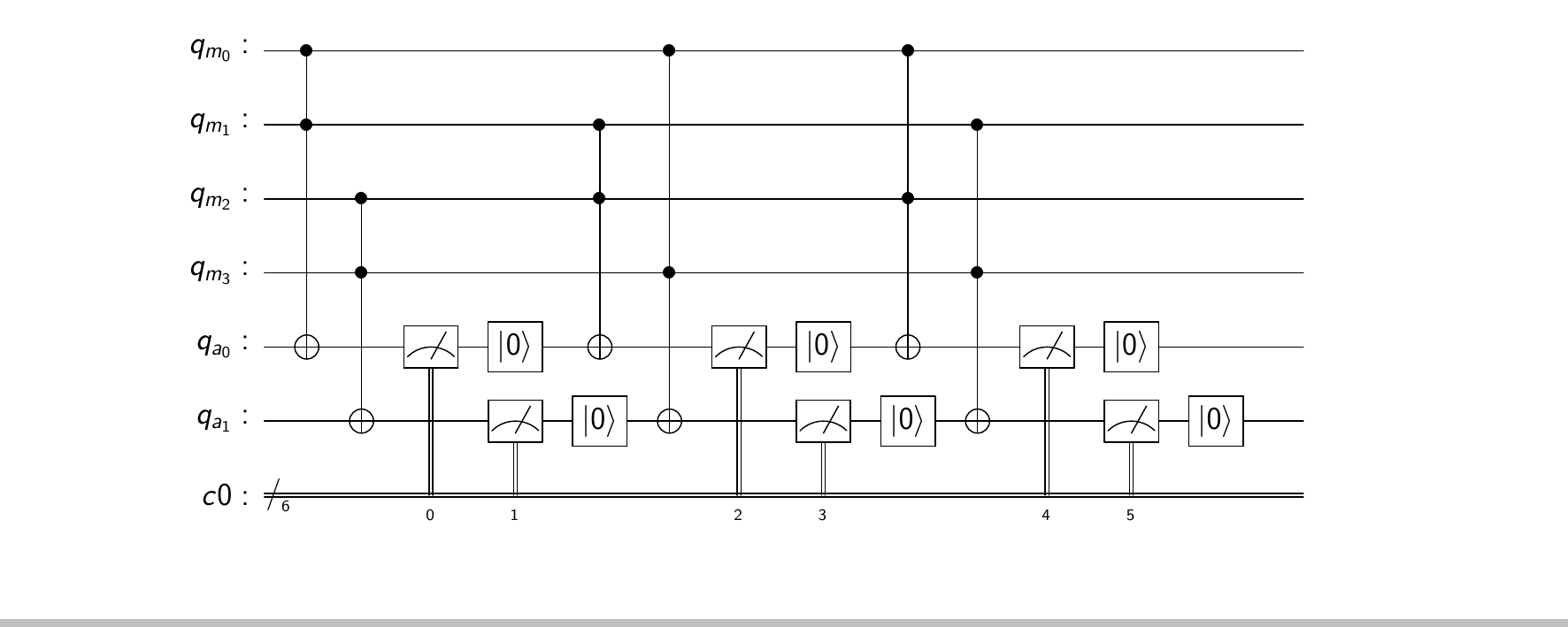}
    \caption{Projection onto the simplicial complex $\PG$: Example circuit diagram with $n=4$ vertices (and six edges)}
    \label{fig:Projectors}
\end{figure}

\begin{figure}[tb!]
    \centering
\includegraphics[width=1.2\textwidth,trim={10cm 3cm  8cm  0cm },clip]{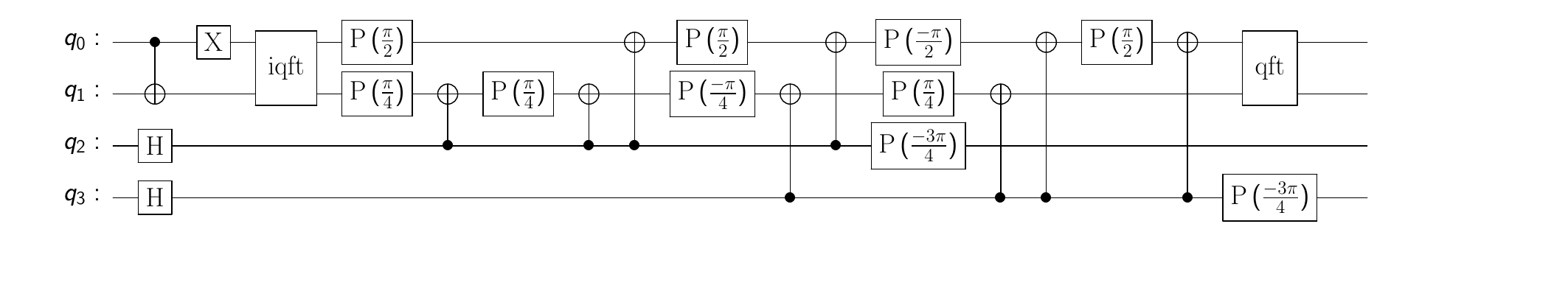}
    \caption{Projection onto the simplicial order $P_k$: Example circuit diagram with $n=4$ vertices}
    \label{fig:Projectors2}
\end{figure}

In the main paper, we discussed the two projection operators, namely, projection onto the simplicial complex $\PG$ and projection onto the  order $P_k$, which are key components of the proposed \NQTDA~algorithm. Here, we present example circuit diagrams for these two operations. Figure~\ref{fig:Projectors} presents an example circuit for $\PG$ with $n=4$ vertices, assuming all possible six edges are present. We have  $n=4$ main qubits, and $n/2$ ancilla qubits needed, along with $n-1$ measure and reset operations.

As discussed in the main text, we can alternatively use $\binom{n}{2}$ ancilla qubits, one for each possible edge. Then, we implement the projector $\PG$ with a single measurement at the end. With this approach, we can obtain a block encoding of the Laplacian $\tilde{\Delta}_k$ using $O(n^2)$ ancilla qubits. The key advantages of this approach
are: (i)  we do not have to repeat the projection operation $1/\zeta_k$ times; and (ii) block-encoding the Chebyshev polynomial $T_j(\tilde{\Delta}_k)$ is possible. See details in the next section.

Figure~\ref{fig:Projectors2} presents a sample circuit for $P_k$, again with  $n=4 $ vertices. We use Fourier transform (QFT/iQFT circuits) as a change of basis, followed by the permutation circuit (phase rotations) to implement the count increment. 
\subsection{Stochastic Rank Estimation}
\label{ssec:chebyshev}

Another key ingredient of our proposed NISQ-QTDA algorithm is a stochastic rank estimation procedure that estimates the Betti number $\beta_k$ by estimating the rank of $\Delta_k$, which replaces the QPE component of the \lloyd algorithm.
The standard approach to estimate the rank of a square matrix is to compute all of its  
eigenvalues and count the number of nonzero eigenvalues, for which prior work~\citep{lloyd2016quantum} has employed QPE. In this paper, we propose a rank estimation procedure that does not require any decomposition of the corresponding matrix. 
In particular, our rank estimation approach is based on the classical \emph{stochastic Chebyshev} method~\citep{ubaru2016fast,ubaru2017fast}.
Namely, the proposed approach recasts the rank estimation problem to one of  estimating the trace of a certain (step) function of the matrix. The trace is then approximated by a stochastic trace estimator, where the step function is approximated 
by a Chebyshev polynomial approximation.

\paragraph{Stochastic trace estimator:}
Given a Hermitian matrix $A\in\RR^{N\times N}$, the stochastic trace estimation method~\citep{hutchinson1990stochastic,avron2011randomized} uses only the moments of the matrix to approximate the trace.  In the classical setting, ${\trace (A)}$ is estimated  by first generating random vector states $|v_l\rangle$  with random independent and identically distributed (i.i.d.) entries, $l=1,..,\nv$,
and then computing the average over the moments $\langle v_l | A | v_l\rangle $; namely,
\begin{eqnarray}
\label{eq:trace_estimator}
\trace (A)  \approx \frac{1}{\nv}  \sum_{l=1}^{\nv} \langle v_l | A | v_l\rangle.
\end{eqnarray}
Any random vectors $|v_l\rangle$ with zero mean and uncorrelated coordinates can be used~\citep{avron2011randomized}.

In the quantum setting, however, particularly with NISQ computations, generating random states $|v_l\rangle$ of exponential size with i.i.d.\ entries is not viable. 
Alternatively, it has  been shown that random columns drawn from the Hadamard matrix work very well in practice for stochastic  trace estimation~\citep{fika2017stochastic}.
Sampling a random Hadamard state vector in a quantum computer is extremely simple and can be conducted with a short-depth circuit. Given an initial state $\ket{0}$, we randomly flip the $n$ qubits (possibly by applying a NOT gate as determined by a random $n$-bit binary number $\in [0,2^n-1]$ generated classically). Thereafter, we simply apply Hadamard gates to all qubits. This produces a state corresponding to a random column of the $2^n \times 2^n$ Hadamard matrix.
The columns of a Hadamard matrix have pairwise independent entries.
Hence, we consider the random state vector $| v_l\rangle = | h_{c(l)}\rangle$, i.e., some random Hadamard column with $c(l)$ defining the random index, and then we estimate the moments
$\langle h_{c(l)} | A | h_{c(l)}\rangle$ and average over the $\nv$ samples to approximate the trace. The error analysis for this approach is presented in the next section.

Alternatively, quantum t-design circuits are a popular way to generate pseudo-random states~\citep{ambainis2007quantum,brakerski2019pseudo}. A t-design circuit outputs a state that is indistinguishable from states drawn from a random Haar measure. These t-designs in a quantum computer are equivalent to $t$-wise independent vectors in the classical world~\citep{ambainis2007quantum}. Short-depth circuits exist (though not as short as above) that are approximate $t$-designs~\citep{brakerski2019pseudo}. Such $t$-design circuits can be used to generate the random states $\ket{v_l}$  for  trace estimation. Indeed, random vectors with just $4$-wise independent entries suffice for trace estimation (we omit the details here because this approach is less competitive than the above super-short-depth Hadamard construction).

\paragraph{Chebyshev approximation:}
Assuming the smallest nonzero eigenvalue of $A$ is greater than or equal to $\delta$, then the rank of $A$ can be written as
\begin{equation}
\label{eq:hoft} 
\rank(A) \defeq \trace(h(A)), \ \mbox{where} \ h(x) = \left\{\begin{array}{l l}
1  & \ \textrm{if}\ \ x \ > \delta\\
0  & \ \textrm{otherwise}\\
\end{array} \;. \right.
\end{equation}
Given the eigen-decomposition $A =\sum_{i}\lambda_i|u_i\rangle\langle u_i|$, we have the matrix function $h(A)= \sum_{i} h(\lambda_i)|u_i\rangle\langle u_i|$ where the step function $h(\cdot)$ takes a value of $1$ above the threshold $\delta>0$. The parameter $\delta$ is assumed to be known  (or, in the classical setting, can be estimated using the spectral density method~\citep{ubaru2016fast}).
In the case of TDA,  for many  simplicial-complex types,  a lower bound for the smallest nonzero eigenvalue of $\Delta_k$ can be estimated; refer to Section~\ref{ssec:Qadv} for a few examples.

Next, the approach of Ubaru et al.~\citep{ubaru2016fast,ubaru2017fast} consists of approximating the matrix function $h(A)$ by employing Chebyshev polynomials~\citep{trefethen2019approximation}, and estimating the trace using the stochastic
estimator~\eqref{eq:trace_estimator}.
More specifically, $h(A)$ is approximately expanded in the following manner
\[
 h(A) \approx \sum_{j=0}^mc_jT_j(A),
\]
where  $T_j(x)$ is the $j$th-degree Chebyshev polynomial of the first kind, formally defined as 
$T_j (x) = \cos(j \cos^{-1} (x))$.
We therefore have $T_0(x) =1$, $T_1(x) = x$ and
$
    T_{j+1}(x) = 2x T_j(x) - T_{j-1} (x).
 $
The expansion coefficients $c_j$ for the polynomial to 
approximate a step function $h(t)$, 
taking value $1$ in $[a,\ b]$   and $0$ elsewhere, are known to be given by
\[
 c_j=\begin{cases}
\frac{1}{\pi}(\cos^{-1}(a)-\cos^{-1}(b)) & : \: j=0\\
                                  \frac{2}{\pi}\left(\frac{
                                  \sin(j\cos^{-1}(a))-\sin(j\cos^{-1}(b))}{j}\right)  & : \: j>0
                        \end{cases} \;\;.
\]
Therefore, the rank of a given matrix $A$, with the smallest nonzero eigenvalue greater than or equal to $\delta$, can be approximately estimated using the stochastic Chebyshev method as:
\begin{equation}\label{eq:rank} 
\rank(A)\approx
\frac{1}{\nv}\sum_{l=1}^{\nv}\left[\sum_{j=0}^m c_j\langle v_l|T_j(A)|v_l\rangle\right].
\end{equation}

The method estimates the rank using only the Chebyshev moments of the matrix $\langle v_l|T_j(A)|v_l\rangle$.
Classically, these moments are typically built using the three-term recurrence~\citep{ubaru2016fast}. Therefore, we need to compute  the Chebyshev moments $\langle v_l|T_j(\tilde{\Delta}_k)|v_l\rangle$ for $j=0,\ldots,m$ on the quantum computer using qubitization\footnote{A previous version of this paper proposed an algorithm that did not make use of qubitization but instead used the  relationship between the Chebyshev polynomials and the  moments of the Laplacian. However this approach has key limitations as pointed out to us by Adam Connolly and Julien Sorci, necessitating use of the qubitization method.
}.

 \paragraph{Qubitization -  Block encoding Chebyshev polynomial of a Hermitian matrix:}
 
 Suppose we are given a $(1,q)$-block encoding $U_A$ of a Hermitian matrix $A$~\citep{lin2022lecture}. Then we can use the qubitization idea~\citep{low2019hamiltonian,gilyen2019quantum} to obtain a block encoding of  $T_j(A)$.

 Given the eigen-decomposition $A =\sum_{i}\lambda_i|u_i\rangle\langle u_i|$, we have for any eigenstate $\ket{u_i}$ that
 \[
U_A \ket{0^q}\ket{u_i} = \ket{0^q} A\ket{u_i} + \ket{\tilde{\perp}_i} = \lambda_i\ket{0^q} \ket{u_i} + \ket{\tilde{\perp}_i},
 \]
where $\ket{\tilde{\perp}_i}$ is an unnormalized state that is orthogonal to all states of the form $\ket{0^m}\ket{\psi} $.
From above, we have $\ket{\tilde{\perp}_i} = \sqrt{1-\lambda_i^2}\ket{{\perp}_i}$ for a  normalized state $\ket{{\perp}_i}$. If $U_A$ is Hermitian, we can show that
$\mathcal{H}_i = span(\ket{0^q} \ket{u_i}, \ket{{\perp}_i})$ is an invariant subspace of $U_A$.

Next, we consider a  projection operator to the basis $\mathcal{B}_i = span(\ket{0^q} \ket{u_i}, \ket{{\perp}_i})$ such that:
\[
[Z_{\Pi}]_{\mathcal{B}_i} = \begin{bmatrix}
1 & 0 \\
0 & -1\\
\end{bmatrix}.
\]
That is, $ Z_{\Pi}$ is as a reflection operator restricted to each subspace $\mathcal{H}_i$.
Next, if we define a rotation matrix 
\[
O = U_AZ_{\Pi},
\]
then $\mathcal{H}_i$ is invariant to this matrix and its powers. We can show that
\[
O^j = \begin{bmatrix}
T_j(A) & * \\
* & *\\
\end{bmatrix}.
\]
Namely, $O^j = (U_AZ_{\Pi})^j$ is a  $(1,q)$-block encoding of $T_j(A)$. 

For $q=1$, $Z_{\Pi}$ is just a Pauli $Z$ gate (defined on  the subspace of the projection). For $q>1$, $Z_{\Pi}$ can be implemented using one additional qubit, two control gates, and a  Pauli $Z$ gate.
For details, see Chapter 7 in~\citep{lin2022lecture}.
Therefore, given a Hermitian block encoding of $A$, we can compute a $(1,q+1)$-block encoding of $T_j(A)$ using qubitization.

In our case, if we use an $\binom{n}{2}$ ancilla approach for $\PG$, then using the circuits for  $P_k$, $\PG$ and  $\tilde{B} = B/\sqrt{n}$, we can compute a $(1,q)$-Hermitian block encoding of $\tilde{\Delta}_k$, with $q = O(n^2)$, and form 
\[\ket{\phi_l} = \ket{0^q} \tilde{\Delta}_k \ket{v_l} + \ket{\tilde{\perp}}.\]
Then, using the above approach, we can obtain a $(1,q+1)$-block encoding of $T_j(\tilde{\Delta}_k)$, and form 
\[\ket{\psi^{(j)}_l} = \ket{0^{q+1}} T_j(\tilde{\Delta}_k) \ket{v_l} + \ket{{\perp}}\] from $\ket{\phi_l}$. We can next compute the Chebyshev moments $\theta_l^{(j)} = \bra{v_l}T_j(\tilde{\Delta}_k) \ket{v_l} $ from $\ket{\psi^{(j)}_l}$ for $j=0,\ldots,m$ and $l=1,\ldots,\nv$.
If we use the measure-and-reset approach for $\PG$ ($n/2$ ancilla and $n-1$ measure and resets), then implementing $Z_{\Pi}$ will be more involved.

The next section presents our analysis of the error and the computational complexities of the above NISQ-TDA algorithm.

\section{Theoretical Analyses}\label{sec:analysis}
We turn to the theoretical analysis of our proposed NISQ-QTDA algorithm, first discussing the error analysis that provides bounds on the number of random vectors $\nv$ and the polynomial degree $m$ needed to achieve BNE  with $ \left| \chi_k - \frac{\beta_k}{|S_k|}\right| \leq \epsilon$, and subsequently presenting the gate and time complexities of the algorithm.
We then discuss different scenarios under which the QTDA algorithms can achieve significant speedups over classical algorithms, including when our proposed algorithm can be NISQ implementable.

\subsection{Error Analysis}\label{ssec:Err}

Algorithm~\ref{alg:algo1} returns a Betti number estimate $\chi_k$ for each order $k=0,\ldots,n-1$. Our main result in Theorem~\ref{theo:main} shows that, for the appropriate choice of $m$ and $\nv$, this
estimate is a BNE with an additive error $\eps\in(0,1)$. 
Here we present the detailed proof of this theorem.

\paragraph{Proof:}
The proof of the theorem comprises of two parts. The first is related to the error due to the  stochastic trace estimator. The random state vector in our algorithm $| v_l\rangle = | h_{c(l)}\rangle$ is some random Hadamard column with $c(l)$ defining the random index, then the estimate
$\langle h_{c(l)} | A | h_{c(l)}\rangle$ can be viewed as a uniform random sample 
of the transformed matrix $M = HAH^T$ with  the Hadamard matrix $H$, i.e., 
$\langle h_{c(l)} | A | h_{c(l)}\rangle  = \langle e_{c(l)} | M | e_{c(l)}\rangle$ where $| e_{l}\rangle$ are basis vectors. Hence, we can use the analysis of unit vector estimators in~\citep{avron2011randomized} to obtain error bounds. In particular, we apply Theorem $16$ of
\citep{avron2011randomized}.

\begin{lemma}\citep[Theorem 16]{avron2011randomized}\label{lemma:hada}
Assume we are given a Hermitian matrix $A\in\RR^{N\times N}$, error tolerance $\epsilon\in(0,1)$ and probability parameter $\eta\in (0,1)$. Then, for random state vectors $|v_l\rangle = | h_{c(l)}\rangle$ as random Hadamard columns, $l=1,\ldots,\nv$, and for $\nv \geq \frac{r^2_H(A)\log(2/\eta)}{\epsilon^2}$ where $r_H(A) = \max_i A_{ii}$, we have 
\begin{equation}\label{sec:eq:epsdel}
    \pr\left(\left| \frac{1}{\nv}  \sum_{l=1}^{\nv} \langle v_l | A | v_l\rangle
    - \trace(A)\right| \leq \epsilon \cdot N\right) \geq 1-\eta \; .
\end{equation}
\end{lemma}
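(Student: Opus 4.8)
The plan is to exhibit the estimator as an average of bounded, independent and identically distributed random variables and then apply a Hoeffding-type concentration inequality, so that the number of samples $\nv$ controls the failure probability exactly as claimed. First I would record the reduction already indicated above: writing $M = HAH^\dagger$ for the orthogonal Hadamard transform $H$, a uniformly chosen Hadamard column $|h_{c(l)}\rangle$ satisfies $\langle h_{c(l)}|A|h_{c(l)}\rangle = \langle e_{c(l)}|M|e_{c(l)}\rangle = M_{c(l)c(l)}$, with the index $c(l)$ drawn uniformly from $\{1,\dots,N\}$. Sampling Hadamard columns of $A$ is therefore identical to sampling standard-basis diagonal entries of the orthogonally equivalent matrix $M$, so the analysis reduces to the unit-vector estimator of Avron and Toledo applied to $M$.

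Next I would verify unbiasedness and independence. Adopting the standard scaling under which the sampled vectors are rescaled to squared norm $N$ (so that $\mathbb{E}[|v_l\rangle\langle v_l|] = I$ and the estimator is unbiased for the full trace), the single-sample statistic $X_l := \langle v_l|A|v_l\rangle = N\,M_{c(l)c(l)}$ has mean $\mathbb{E}[X_l] = \tfrac{1}{N}\sum_c N M_{cc} = \trace(M) = \trace(A)$, using the trace-invariance of the orthogonal change of basis. The $X_l$ are i.i.d. since the indices $c(l)$ are drawn independently and uniformly, so the sample mean $\tfrac1\nv\sum_l X_l$ is an unbiased estimator of $\trace(A)$.

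The crux is the range bound, and this is where the parameter $r_H$ enters. Because the per-sample values are $N$ times diagonal entries rather than functions of the eigenvalues, $X_l$ is confined to an interval of length $O(N\,r_H)$, where $r_H$ is the largest diagonal entry of the matrix being sampled in the standard basis; for the positive-semidefinite, projector-type matrices of the rank-estimation application ($A = h(\tilde{\Delta}_k)$ has eigenvalues in $\{0,1\}$, and so does the orthogonally similar $M$), every diagonal entry lies in $[0,1]$, so $r_H \le 1$ and the range is at most $N$. Applying Hoeffding's inequality with deviation threshold $\epsilon N$ then gives
\[
\pr\!\left(\Big|\tfrac{1}{\nv}\sum_{l=1}^{\nv}\langle v_l|A|v_l\rangle - \trace(A)\Big| > \epsilon N\right)
\le 2\exp\!\left(-\frac{2\,\nv\,\epsilon^2}{r_H^2}\right),
\]
the factors of $N$ from the threshold and from the range cancelling in the exponent. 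Setting $\nv \ge r_H^2\log(2/\eta)/\epsilon^2$ forces the right-hand side below $\eta$, which is exactly the stated sample bound and completes the argument.

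I expect the main obstacle to be obtaining the range bound in the sharp parameter $r_H = \max_i A_{ii}$ (a diagonal quantity) rather than a crude operator-norm bound, since it is precisely this that lets the guarantee depend on the transformed diagonal and justifies the unit-vector reduction; I would also need to confirm that the diagonal of the transformed matrix $M$ is controlled, which is immediate for projectors where $0 \le M_{ii} \le 1$. A secondary point is whether the indices are sampled with or without replacement: in the without-replacement case I would substitute a Hoeffding--Serfling bound or a bounded-difference martingale estimate, which only sharpens the concentration and hence preserves the claimed sample complexity.
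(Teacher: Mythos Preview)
Your proposal is correct and follows essentially the same route as the paper: reduce the Hadamard-column estimator to the unit-vector (diagonal-sampling) estimator for $M=HAH^\dagger$, then apply Hoeffding with threshold $t=\epsilon N$ and a range of order $N\cdot r_H$. The paper's proof is only a two-line sketch invoking Avron--Toledo's Theorem~16, whereas you have spelled out the unbiasedness, the i.i.d.\ structure, and the Hoeffding computation explicitly; your caution about whether the range should be controlled by $\max_i A_{ii}$ versus $\max_i M_{ii}$ is well placed (the paper's one-line justification ``$M_{ii}=N\cdot A_{ii}$'' is not a general identity), and your resolution via the projector case $0\le M_{ii}\le 1$ is exactly what is needed for the downstream application.
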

The proof follows from the arguments establishing Theorem 16 in~\citep{avron2011randomized}, where we set $t = \epsilon \cdot N$ in the Hoeffding's inequality, the samples take values in the interval $[0,\max_i M_{ii}]$, and we know $M_{ii} = N\cdot A_{ii}$ since $H$ has orthogonal columns with $\|h_i\|^2 = N$. 

The second part is due to the error in the Chebyshev polynomial approximation of the step function. For our analysis, since the step function is a discontinuous function, we consider a surrogate function to approximate using the Chebyshev polynomials. 
In particular, we consider the following polynomial, which was considered in~\citep{musco2015randomized} for the analysis of Krylov subspace methods.
\begin{lemma}[Chebyshev Minimizing Polynomial~\citep{musco2015randomized}]
  \label{lemm:1}
  Let $\alpha >0$ be a specified parameter, and the gap $\gamma \in (0,1]$, and let $q \geq 1$.
  Then, there exists a degree $m$ polynomial $p(x)$ such that:
  \begin{itemize}
    \item $p((1+\gamma)\alpha) = (1+\gamma)\alpha$,
    \item $p(x) \geq x$ for all $x \geq (1+\gamma)\alpha$,
    \item $\lvert p(x) \rvert \leq \frac{\alpha}{2^{m\sqrt{\gamma}-1}}$ for all $x \in [0,\alpha]$.
  \end{itemize}
  Furthermore, when $q$ is odd, the polynomial only contains odd powered monomials and the polynomial is given by $$p(x)=(1+\gamma)\alpha \frac{{T}_m(x/\alpha)}{{T}_m(1+\gamma)}.$$
\end{lemma}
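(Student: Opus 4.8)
The plan is to use the construction already displayed in the lemma statement, the scaled Chebyshev polynomial
\[
p(x) \;=\; (1+\gamma)\,\alpha\,\frac{T_m(x/\alpha)}{T_m(1+\gamma)},
\]
and to verify the three bullet points by direct appeal to the elementary identities $T_m(\cos\theta)=\cos(m\theta)$, $T_m(\cosh\theta)=\cosh(m\theta)$, and $\lvert T_m\rvert\le 1$ on $[-1,1]$; the ``furthermore'' clause will then follow from the parity of $T_m$. The construction is well defined because $1+\gamma>1$ forces $T_m(1+\gamma)>0$.

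First I would check the interpolation property: substituting $x=(1+\gamma)\alpha$ gives $p((1+\gamma)\alpha)=(1+\gamma)\alpha\cdot T_m(1+\gamma)/T_m(1+\gamma)=(1+\gamma)\alpha$. Next, for the uniform bound on $[0,\alpha]$: when $x\in[0,\alpha]$ we have $x/\alpha\in[0,1]$, hence $\lvert T_m(x/\alpha)\rvert\le1$ and so $\lvert p(x)\rvert\le(1+\gamma)\alpha/T_m(1+\gamma)\le 2\alpha/T_m(1+\gamma)$, using $\gamma\le1$. It then remains to lower-bound $T_m(1+\gamma)$: writing $1+\gamma=\cosh\theta_0$ with $\theta_0=\operatorname{arccosh}(1+\gamma)$ and using $T_m(1+\gamma)=\cosh(m\theta_0)\ge\tfrac12 e^{m\theta_0}$, one estimates $\theta_0=\ln\!\bigl(1+\gamma+\sqrt{2\gamma+\gamma^2}\bigr)\ge\sqrt{\gamma}\,\ln 2$ for $\gamma\in(0,1]$, which yields $T_m(1+\gamma)\ge\tfrac12\,2^{m\sqrt{\gamma}}$ and hence the stated bound $\lvert p(x)\rvert\le\alpha/2^{m\sqrt{\gamma}-1}$ once the constants are chased carefully (a slightly sharper estimate of $\theta_0$ absorbs the leftover factor of $2$).

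For the monotone lower bound $p(x)\ge x$ on $[(1+\gamma)\alpha,\infty)$ I would substitute $x=\alpha\cosh\theta$ with $\theta\ge\theta_0$, so that $T_m(x/\alpha)=\cosh(m\theta)$; the desired inequality becomes
\[
\frac{\cosh(m\theta)}{\cosh\theta}\;\ge\;\frac{\cosh(m\theta_0)}{\cosh\theta_0},
\]
which holds provided $g(\theta):=\cosh(m\theta)/\cosh\theta$ is nondecreasing on $(0,\infty)$. The numerator of $g'$ has the sign of $m\tanh(m\theta)-\tanh\theta$, and since $m\ge1$ and $\tanh$ is increasing with $m\theta\ge\theta$, this is nonnegative, so the claim follows. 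Finally, for the ``furthermore'' clause: when $q$ is odd we take the degree $m$ odd as well; then $T_m(-x)=(-1)^m T_m(x)=-T_m(x)$, so $T_m$ --- and hence $p$ --- is a polynomial in odd powers of $x$ only, which is exactly the displayed formula.

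I expect the only non-mechanical step to be the quantitative estimate $\operatorname{arccosh}(1+\gamma)\ge\sqrt{\gamma}\,\ln2$ (equivalently $1+\gamma+\sqrt{2\gamma+\gamma^2}\ge 2^{\sqrt{\gamma}}$) uniformly over $\gamma\in(0,1]$: it is elementary but requires a short convexity/monotonicity argument rather than a one-line bound, and the precise constant in the exponent $2^{m\sqrt{\gamma}-1}$ depends on how tightly it is carried out. Every other step is a direct consequence of the standard Chebyshev identities and the parity of $T_m$.
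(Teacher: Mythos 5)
Your proposal is correct in structure and, notably, supplies a self-contained argument where the paper gives none: the paper states this lemma purely as an imported result from \cite{musco2015randomized} and never proves it, so any comparison is between your proof and a citation. Your verification of the first bullet (direct substitution), the second bullet (the substitution $x=\alpha\cosh\theta$ and the monotonicity of $\cosh(m\theta)/\cosh\theta$ via the sign of $m\tanh(m\theta)-\tanh\theta$), and the parity clause (odd $m$ gives odd $T_m$, with the stray parameter $q$ correctly read as the degree $m$) are all sound.

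The one step you flag as needing care is indeed the only real gap, and it is slightly more delicate than ``a slightly sharper estimate of $\theta_0$ absorbs the leftover factor of $2$.'' Your chain gives $\lvert p(x)\rvert\le(1+\gamma)\alpha/T_m(1+\gamma)$ and $\theta_0\ge\sqrt{\gamma}\ln 2$, hence $\lvert p(x)\rvert\le(1+\gamma)\,\alpha/2^{m\sqrt{\gamma}-1}$, which exceeds the stated bound by the factor $1+\gamma\le 2$. The required inequality $T_m(1+\gamma)\ge(1+\gamma)\,2^{m\sqrt{\gamma}-1}$ is in fact tight (equality at $m=1$, $\gamma=1$, where $p(x)=x$ and $\max_{[0,\alpha]}\lvert p\rvert=\alpha$), so no uniform improvement of the constant in $\theta_0\ge c\sqrt{\gamma}$ alone can close it for $m=1$. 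A clean way to finish is: (i) for $m=1$ verify $T_1(1+\gamma)=1+\gamma\ge(1+\gamma)2^{\sqrt{\gamma}-1}$ directly from $\gamma\le1$; (ii) for $m\ge2$ use $\cosh(m\theta_0)\ge\cosh(\theta_0)\cosh\bigl((m-1)\theta_0\bigr)$ together with the sharper estimate $\theta_0=\operatorname{arccosh}(1+\gamma)\ge\operatorname{arccosh}(2)\sqrt{\gamma}$ on $(0,1]$ (which follows from $\cosh(c\sqrt{\gamma})\le1+\gamma(\cosh c-1)$ for $\sqrt{\gamma}\le1$), reducing the claim to $\cosh\bigl((m-1)\theta_0\bigr)\ge 2^{m\sqrt{\gamma}-1}$, which the $\tfrac12 e^{(m-1)\theta_0}$ bound then delivers for $m\ge3$ with a short direct check at $m=2$. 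For the paper's downstream use in Proposition~\ref{prop:1} only the scaling $m=O(\log(1/\eps)/\sqrt{\delta})$ matters, so the constant in the exponent is immaterial there; but as a proof of the lemma as literally stated, this constant-chasing is the step that must be carried out rather than asserted.
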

Here ${T}_m(x)$ is the $m$-degree Chebyshev polynomial of the first kind. Utilizing this polynomial, we have the following result.

\begin{proposition}\label{prop:1}
The Betti number estimate $\xi$ given by
\begin{equation}
       \xi = \frac{\trace(\tilde{p}(\tilde{\Delta}_k))}{|S_k|},
\end{equation}
where $\tilde{p}(x) = p(1-x)$ and $p(\cdot)$ is the polynomial in Lemma~\ref{lemm:1}, for parameters $\alpha = (1- \delta)$, and $\gamma = \frac{\delta}{1-\delta}$ and a degree
$m \geq \frac{\log(1/\eps)}{\sqrt{\delta}}$, satisfies 
\[
  \left| \xi - \frac{\beta_k}{|S_k|}\right| \leq \epsilon.
  \]
\end{proposition}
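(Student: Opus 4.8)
The plan is to show that $\tilde p(\tilde\Delta_k)$ acts as an (additive-$\eps$) approximation to the projector $h(\tilde\Delta_k)$ onto the nonzero eigenspace of $\tilde\Delta_k$, so that taking traces and dividing by $|S_k|$ gives the claimed bound on $|\xi - \beta_k/|S_k||$. Recall $\beta_k = \dim\ker(\Delta_k) = |S_k| - \rank(\Delta_k)$, so $\beta_k/|S_k| = 1 - \trace(h(\tilde\Delta_k))/|S_k|$, where $h$ is the step function that is $1$ on $[\delta,1]$ and $0$ at $0$. Equivalently, writing $g(x) := 1 - h(x)$ (the indicator of the eigenvalue $0$), we have $\beta_k/|S_k| = \trace(g(\tilde\Delta_k))/|S_k|$, and $\xi = \trace(\tilde p(\tilde\Delta_k))/|S_k|$ with $\tilde p(x) = p(1-x)$. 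So it suffices to prove the operator-level (equivalently eigenvalue-level) estimate: for every eigenvalue $\lambda$ of $\tilde\Delta_k$ — which by hypothesis lies in $\{0\}\cup[\delta,1]$ — we have $|\tilde p(\lambda) - g(\lambda)|$ small, and then sum over the (at most $|S_k|$) eigenvalues.

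First I would substitute the parameters. With $\alpha = 1-\delta$ and $\gamma = \delta/(1-\delta)$ we get $(1+\gamma)\alpha = (1-\delta) + \delta = 1$, and $\sqrt\gamma = \sqrt{\delta/(1-\delta)} \geq \sqrt\delta$. Now evaluate $\tilde p(\lambda) = p(1-\lambda)$ on the two parts of the spectrum. \textbf{Case $\lambda = 0$:} then $1-\lambda = 1 = (1+\gamma)\alpha$, so by the first bullet of Lemma~\ref{lemm:1}, $p(1) = 1$, hence $\tilde p(0) = 1 = g(0)$ exactly. \textbf{Case $\lambda \in [\delta,1]$:} then $1-\lambda \in [0, 1-\delta] = [0,\alpha]$, so by the third bullet $|\tilde p(\lambda)| = |p(1-\lambda)| \leq \alpha/2^{m\sqrt\gamma - 1} \leq (1-\delta)/2^{m\sqrt\delta - 1} \leq 2^{1 - m\sqrt\delta}$, while $g(\lambda) = 0$; so $|\tilde p(\lambda) - g(\lambda)| \leq 2^{1-m\sqrt\delta}$. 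Requiring this to be $\leq \eps/|S_k|$ — or, more simply, noting that we are bounding a sum of at most $|S_k|$ such terms — it suffices that $|S_k|\, 2^{1-m\sqrt\delta} \leq \eps$, i.e. $m\sqrt\delta \geq 1 + \log_2(|S_k|/\eps)$. Since $|S_k| \leq 2^n$ this is implied by $m \gtrsim n/\sqrt\delta$; but actually the cleaner route, matching the stated hypothesis $m \geq \log(1/\eps)/\sqrt\delta$, is to bound each eigenvalue contribution directly and observe that the zero-eigenvalue part is \emph{exact}, so only the $\rank(\Delta_k)$ nonzero eigenvalues contribute error, each of size $\leq 2^{1-m\sqrt\delta}$; choosing $m \geq \log(1/\eps)/\sqrt\delta$ makes $2^{-m\sqrt\delta}$ at most (a constant times) $\eps$, and one absorbs the $\rank(\Delta_k) \le |S_k|$ factor into the normalization since $\trace/|S_k|$ is a \emph{mean} of the per-eigenvalue errors, not a sum — the mean of $\rank(\Delta_k)$ terms each $\le 2^{1-m\sqrt\delta}$ plus $\beta_k$ zero terms is itself $\le 2^{1-m\sqrt\delta}$.

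Putting this together: $|\xi - \beta_k/|S_k|| = \big|\tfrac{1}{|S_k|}\sum_i (\tilde p(\lambda_i) - g(\lambda_i))\big| \leq \tfrac{1}{|S_k|}\sum_i |\tilde p(\lambda_i) - g(\lambda_i)| \leq \max_i |\tilde p(\lambda_i) - g(\lambda_i)| \leq 2^{1-m\sqrt\delta} \leq \eps$ for $m \geq \log(1/\eps)/\sqrt\delta$ (adjusting constants in the $O(\cdot)$; e.g. $2^{1-m\sqrt\delta} \le e^{-m\sqrt\delta\ln 2 + \ln 2} \le \eps$ for $m\sqrt\delta \ge \log(2/\eps)/\ln 2$, which is $O(\log(1/\eps)/\sqrt\delta)$). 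The main thing to be careful about — and the only real subtlety — is the interaction of the $1/|S_k|$ normalization with the number of nonzero eigenvalues: one must use that the quantity is an \emph{average} over eigenvalues (so large $|S_k|$ does not hurt) together with the \emph{exactness} at $\lambda=0$ (so the $\beta_k$ kernel directions contribute no error at all), which is precisely what makes the bound $m \geq \log(1/\eps)/\sqrt\delta$ — with no $n$ or $|S_k|$ dependence — correct. I would also note in passing that the extra structure in Lemma~\ref{lemm:1} (odd polynomial when $q$ odd, and $p(x)\ge x$) is not needed here; only the endpoint normalization and the $[0,\alpha]$ sup bound are used.
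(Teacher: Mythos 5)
Your proof is correct and follows essentially the same route as the paper's: substituting $\alpha=1-\delta$, $\gamma=\delta/(1-\delta)$ so that $(1+\gamma)\alpha=1$, using exactness of $\tilde p$ at the zero eigenvalues and the sup bound $\alpha/2^{m\sqrt\gamma-1}$ on $[\delta,1]$ for the nonzero ones, and then observing that the $1/|S_k|$ normalization turns the eigenvalue-error sum into an average. Your explicit handling of the $|S_k|$ normalization and the constant adjustment in $m$ is, if anything, slightly more careful than the paper's.
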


\begin{proof}
Suppose the eigenvalues of $\tilde{\Delta}_k$ are in the interval $\{0\}\cup [\delta,1]$, then the eigenvalues of $I - \tilde{\Delta}_k$ will be in the interval $ [0,1-\delta] \cup \{1\}$, and by  Lemma~\ref{lemm:1} with $\alpha = (1- \delta)$ and $\gamma = \frac{\delta}{1-\delta}$, we have 
\[
\beta_k - |S_k| \frac{\alpha}{2^{m\sqrt{\gamma}-1}}\leq \trace(\tilde{p}(\tilde{\Delta}_k))\leq \beta_k + |S_k| \frac{\alpha}{2^{m\sqrt{\gamma}-1}},
\]
since $(1+\gamma)\alpha= 1$ and the function is $\lvert \tilde{p}(x) \rvert \leq \frac{\alpha}{2^{m\sqrt{\gamma}-1}}$ for all $x \in [\delta, 1]$.
For selecting an appropriate degree $m$,  we want $\frac{\alpha}{2^{m\sqrt{\gamma}-1}} \leq \eps$, for which $m \geq \frac{\log(\alpha/\eps)}{\sqrt{\gamma}}$ suffices.
By substituting the values we conclude that 
\[
m\geq \frac{\log(1/\eps)}{\sqrt{\delta}}
\]
suffices.
\end{proof}

We are now ready to complete the proof of the main theorem.
 The stochastic Chebyshev method approximates the trace as  $\trace_{\nv}(\tilde{p}(\tilde{\Delta}_k))= \frac{1}{\nv}\sum_l\langle v_l | \tilde{p}(\tilde{\Delta}_k) | v_l \rangle $, for random vector states $| v_l \rangle$. The Betti number estimate $\chi_k$ can then be written as $\chi_k =  \frac{\trace_{\nv}(\tilde{p}(\tilde{\Delta}_k))}{|S_k|}$. We therefore need to bound
\[
\left| \chi_k - \frac{\beta_k}{|S_k|}\right|  = \frac{1}{|S_k|} \left|\trace_{\nv}(\tilde{p}(\tilde{\Delta}_k)) - \beta_k\right|. 
\]
By triangle inequality, we have
\[
\left|\trace_{\nv}(\tilde{p}(\tilde{\Delta}_k)) - \beta_k\right| \leq
\left|\trace_{\nv}(\tilde{p}(\tilde{\Delta}_k)) - \trace(\tilde{p}(\tilde{\Delta}_k))\right|+
\left|\trace(\tilde{p}(\tilde{\Delta}_k)) - \beta_k)\right|.
\]
From Lemma~\ref{lemma:hada}, since the maximum diagonal entry of $\tilde{\Delta}_k$  is $1$, the size is $|S_k$ and $|\tilde{p}(x)|\leq 1$ for $x\in[0,1]$, and we obtain for $\nv = O(\frac{\log(2/\eta)}{\epsilon^2})$
\[
\left|\trace_{\nv}(\tilde{p}(\tilde{\Delta}_k)) - \trace(\tilde{p}(\tilde{\Delta}_k))\right| \leq \epsilon \cdot  |S_k|. 
\]
Moreover, from Proposition~\ref{prop:1}, we have
\begin{eqnarray*}
  \left|\trace(\tilde{p}(\tilde{\Delta}_k)) - \beta_k\right| &\leq & \epsilon \cdot  |S_k|.
\end{eqnarray*}

\paragraph{Additional errors:}
In addition to these errors, during the actual hardware implementation, we will encounter additional errors due to noise. Two sources of noise exist, namely (a) shot noise due to measurement and (b) hardware noise. The shot noise is typically modelled using a Gaussian assumption, and hence is assumed to reduce as $O(1/\sqrt{T})$ for $T$ repeated measurements/shots. This implies that the Chebyshev moments we compute will have errors. Suppose the additive error/noise 
in the  moment computations is $\eps_T$ after $T$ shots and the noise is independent. That is, each moment we compute $\theta_l^{(j)} = \bra{v_l}T_j(\tilde{\Delta}_k) \ket{v_l}  \pm \eps_T$ for $j=0, \ldots, m$. 

Then, the error due to shot noise in the normalized Betti number estimation  $\chi_k = 1-\frac{1}{\nv}\sum_{l=1}^{\nv}\left[\sum_{j=0}^m c_j\theta_l^{(j)}\right]$, under the independent noise assumption, will be
\[
\text{err}_{\text{shot}} = \sum_{j=0}^m c_j \eps_T.
\]
If we use the step function expansion, then we have $|c_j|\leq (2/j)$ and the shot noise error will be $\text{err}_{\text{shot}} \leq 4 \eps_T$.
If we consider the polynomial in the analysis above, then we only have the $m$th degree polynomial, and $|c_m| \leq 1/|T_m(1+\gamma)| \leq 1$. The shot noise error will be even lower. For the expansion of any analytic function (e.g., we can consider a scaled $\tanh$ function for rank estimation~\citep{ubaru2021quantum}), the Chebyshev coefficients decay exponentially~\citep{trefethen2019approximation}.

The hardware noise is much more difficult to characterize, since it is hardware and technology dependent. Therefore, in order to account for errors due to these two sources of noise, we will need to repeat the whole experiment several times and draw statistics to compute the Betti numbers. 

Combining the results yields the desired bound in the theorem.
\subsection{Complexity Analysis}\label{ssec:complexity}
We now discuss the circuit and computational complexities of our proposed algorithm and show that it is NISQ implementable under certain conditions, such as clique-dense complexes which commonly occur for large resolution scale and high order $k$. 
The main quantum component of the algorithm comprises the computation of $\theta_l^{(j)} = \bra{v_l}T_j(\tilde{\Delta}_k) \ket{v_l}$, for $j=0,\ldots,m \sim O(\log(1/\eps)/\sqrt{\delta})$, with $\nv\sim O(\eps^{-2})$ random Hadamard vectors. The random Hadamard state preparation requires $n$ single-qubit Hadamard gates in parallel and $O(1)$ time.

For a given $k$, constructing $\tilde{\Delta}_k$ involves implementing the boundary operator $\tilde{B}$ and the projectors  $\PG$ and $P_k$.
The operator $B$, involving the sum of $n$ Pauli operators, can be implemented using a circuit with $O(n)$ gates.  Constructing $P_k$ requires $O(n\log^2n)$ gates, and this succeeds for a random order $k$. Then for $\PG$, we need to  find all the simplices that are in the complex $\Gamma$.
This can be achieved in two ways.
The first is the measure-and-reset approach which uses $n/2$ ancilla qubits in parallel and  $n-1$ operations, and thus the time complexity remains $O(n)$. The number of gates required will be $O(n^2 \bar{\zeta}_{k})$, where $\bar{\zeta}_{k} :=\min\{1-\zeta_k,\zeta_k\}$.  If we use the measure and reset approach, the procedure of applying the projectors succeeds when repeated $1/\zeta_k$ times. The projectors together require $O(n^2)$ gates, while the depth remains $O(n)$, and the time complexity for a projection will be $O\left(\frac{n}{\zeta_k}\right)$. The second approach is to use $\binom{n}{2}$ ancilla qubits, one per edge, and measure only once at the end. Here, since we can consider $n/2$ edges at a time, the time and depth of this approach will also be $O(n)$ (once the initial projection is successful, which takes  $O\left(\frac{n}{\zeta_k}\right)$ time).

Since we need to construct $T_j(\tilde{\Delta}_k)$ up to the power $m= O(\log(1/\eps)/\sqrt{\delta})$, the circuit has a total gate complexity of 
$O(n^2\log(1/\eps)/\sqrt{\delta})$ with a depth of $O(n\log(1/\eps)/\sqrt{\delta})$.
The first projection $P_k$ yields a random order $k$, and the subsequent projections onto a simplicial order needed for higher moments will also have to be onto the same order $k$. Due to the application of the boundary operator $B$, the subsequent simplicial order projections will result in a projection onto one of the simplicial orders $k-1$ or $k+1$ (after one application) and $k-2$, $k$ or $k+2$ (after two applications). Hence, we need to repeatedly apply the order projection (a constant number of times) in order to ensure that we are operating on the right order (in addition to the complex projection).
The procedure of computing the $m$ Chebyshev moments is repeated $\nv=O(\eps^{-2})$ times with different random Hadamard column vectors, and thus the total time complexity of our algorithm to compute the BNE $\chi_k$ is given by
\[
O\left(\dfrac{1}{\epsilon^2}\max\left\{\dfrac{n\log(1/\epsilon)}{\sqrt{\delta}},\dfrac{n}{\zeta_k}\right\}\right).
\]
That is, we need $O\left(\frac{n}{\zeta_k}\right)$ time for the initial projection $\PG$ to succeed, and then $O(n\log(1/\eps)/\sqrt{\delta})$ for Chebyshev moments estimation using qubitization. 
Suppose $\delta_k$ is the spectral gap of $\Delta_k$ and $\tilde{\Delta}_k = \tfrac{\Delta_k}{n}$, then $\delta = \tfrac{\delta_k}{n}$.

\subsection{Quantum Advantage}\label{ssec:Qadv}
Table~\ref{tab:comp} summarizes the circuit and computational complexities of our algorithm and compares them to that for the QTDA algorithm of
\cite{lloyd2016quantum}.
As remarked earlier, the gate and time complexities for this QTDA algorithm reported in
\cite{gyurik2020towards} 
and
\cite{gunn2019review} 
are different from those reported in
\cite{lloyd2016quantum}, 
since \cite{gyurik2020towards} and \cite{gunn2019review}  both assume the operator $\tilde{B}$ is given and thus they add the complexities of the two steps (Grover's algorithm and QPE); refer to Remark~\ref{rem:1} above.

\begin{table}[htb]

 \caption{Comparisons of the circuit and computational complexities for QTDA to compute BNE with an $\epsilon$ error, a $\zeta_k$ fraction of order-$k$ simplices in the complex, and a $\delta$ smallest nonzero eigenvalue of 
$\tilde{\Delta}_k$.}\label{tab:comp}
 \begin{center} 
\resizebox{\textwidth}{!}{
 \begin{tabular}{|l|c|c|c|c|}
\hline
Methods & \# Qubits & \# Gates & Depth & Time \\
\hline
\lloyd & $2n+\log n+\frac{1}{\delta}$&  $O\left(\frac{n^2}{\delta\sqrt{\zeta_k}}\right)$ & $O\left(\frac{n^2}{\delta\sqrt{\zeta_k}}\right)$ & $O\left(\frac{n^4}{\epsilon^2\delta\sqrt{\zeta_k}}\right)$\\
Ours (Approach 1) & $3n/2$& $O(n^{2}\log(1/\eps)/\sqrt{\delta})$& $O(n\log(1/\eps)/\sqrt{\delta})$&$O\left(\frac{n\log(1/\epsilon)}{\sqrt{\delta}\epsilon^2\zeta_k^{2\log(1/\epsilon)/\sqrt{\delta}}}\right)$ \\
Ours (Approach 2) & $\tilde{O}(n^2)$& $O(n^{2}\log(1/\eps)/\sqrt{\delta})$& $O(n\log(1/\eps)/\sqrt{\delta})$&$O\left(\dfrac{1}{\epsilon^2}\max\left\{\dfrac{n\log(1/\epsilon)}{\sqrt{\delta}},\dfrac{n}{\zeta_k}\right\}\right)$ \\
\hline
\end{tabular}
}
 \end{center} 
\end{table}

Approach 1 is based on measure-and-reset approach for $\PG$, and approach 2 is the second idea where we use $\binom{n}{2}$ ancillas for qubitization. 

\paragraph{Simplices/Clique dense complexes:}
We first discuss examples of complexes that are simplices/clique dense. 
\cite{gyurik2020towards} 
presented a few examples of a family of graphs that are clique-dense. Using the clique-density theorem~\citep{reiher2016clique}, we can consider a class of graphs/complexes that are clique-dense. Let $\gamma> \frac{k-2}{2(k-1)}$ be a constant.
Then, for a graph with $n$ nodes and $\gamma n^2$ edges and for a given order $k \geq 3$, we have $|S_k| = \Omega(n^{k+1})$ by the clique-density theorem~\citep{reiher2016clique}. If $\gamma \geq \frac{k-1}{k}$, then the graph will be even denser. Such clique-dense complexes occur in TDA when the resolution scale $\veps$ is large (close to maximum distance between points), and therefore QTDA algorithms can achieve a significant speedup for BNE over classical algorithms, particularly when we are interested in larger (and many) orders of $k$. We also refer to the discussions in~\cite{lloyd2016quantum, gyurik2020towards} on when quantum TDA algorithms are advantageous.

\paragraph{Laplacian spectral gap:}
We next discuss different settings, namely when the Laplacian of a given simplicial complex has a sufficiently large spectral gap such that a small degree $m$ will suffice for BNE.
Not much is known for general simplicial complexes in terms of lower bounds for $\delta$, the smallest nonzero eigenvalue of the combinatorial Laplacians~\citep{gyurik2020towards}. However, we can identify many specific examples of simplicial complexes for which  $\delta$ can be large. Indeed, several articles~\citep{goldberg2002combinatorial,horak2013spectra, yamada2019spectrum, lew2020spectral, lew2020spectral2} have studied  the  spectra of the Laplacian of different simplicial complexes, including random simplicial complexes~\citep{gundert2016eigenvalues,kahle2016random,knowles2017eigenvalue,adhikari2020spectrum,beit2020spectral}.

{\it Some specific complexes:} First, let us consider a few specific types of simplicial complexes. The articles by 
\cite{horak2013spectra} 
and
\cite{yamada2019spectrum} 
consider the Laplacian spectra of $k$-regular complexes and orientable complexes. A simplicial complex $\Gamma$ is $k$-regular if and only if all of its $k$-faces have the same degree $d_k$, whereas a $k+1$-dimensional simplicial complex $\Gamma$ is \emph{orientable} if and only if all $k$-faces of $\Gamma$ have orientation such that any two simplices which intersect on a $(k-1)$-face induce a different orientation on that face. For $k$-regular simplicial complexes with degree $d_k=1$, the Laplacian $\Delta_{k}$ has all nonzero eigenvalues equal to $k+2$. 
\cite{horak2013spectra} 
show similar results for higher degree and for orientable $k$-dimensional simplicial complexes.
\cite{ yamada2019spectrum} 
presents lower bounds on the nonzero eigenvalues of the Laplacian for these two types of complexes in terms of the Ricci curvature~\citep{bauer2011ollivier} of the complex. For an orientable $k$-dimensional  simplicial complex $\Gamma$ with maximum degree $d_k$ for the $(k-1)$-faces, the smallest nonzero eigenvalues of $\Delta_k$, denoted by $\delta_k$, satisfies
\[
\delta_k \geq (k+1)(\kappa_c -1)+\frac{2}{d_k},
\]
where $\kappa_c$ is the Ricci curvature on $\Gamma$. If the complex is orientable $k$-regular, then the minimal eigenvalues of $\Delta_k$ satisfies $\delta_k \geq (k+1)\kappa_c$. We refer to~\cite{yamada2019spectrum} for bounds on the Ricci curvature $\kappa_c$ for $k$-regular complexes. 
Such complexes therefore can have a large spectral gap between zero and nonzero eigenvalues (i.e., large $\delta$ for the scaled Laplacian) when $k$ is sufficiently large.

Next, the article by 
\cite{goldberg2002combinatorial}
considers the Laplacian spectra of a few specific complexes.  
For a finite simplicial complex $\Gamma$ that contains distinct flapoid clusters of size $d_c$, the nonzero eigenvalues of the Laplacian are all equal to $d_c = o(n)$. The article by 
\cite{lew2020spectral2} 
presents a lower bound for the spectral gap of the $k$-Laplacian $\Delta_k$ for complexes without missing faces.
In particular, for an $n$-vertex simplicial complex $\Gamma$ without missing faces of dimension larger than $\ell$, the smallest nonzero eigenvalue (spectral gap) of $\Delta_k$, for $k\leq \ell$, satisfies
\[
\delta_k \geq (\ell+1)(d_k +k+1)- \ell n,
\]
where $d_k$ is the minimal degree of a $k$-simplex in $\Gamma$. 
These complexes therefore can also have a large spectral gap, under appropriate conditions. 

{\it Random complexes:} Let us now consider random simplicial complexes. For a random complex $\Gamma$ with $n$ vertices and constants $C_1,C_2$ and $p\geq (k+C_1)\log(n)/n$,
\cite{gundert2016eigenvalues} 
show that, if the expected degree of $k-1$ faces is $d_{k}:=p(n-k)$, then the normalized Laplacian\footnote[12]{The normalized Laplacian is defined as $\hat{\Delta}_{k} := D_k^{-1}\Delta_k$, where $D_k$ is the diagonal matrix with the degrees of the faces as its diagonal entries.} $\hat{\Delta}_{k}$ has all its nonzero eigenvalues in the interval 
\[
\left[1-\frac{C_2}{\sqrt{d_{k}}},1+\frac{C_2}{\sqrt{d_{k}}}\right],
\]
with high probability.
It was recently shown by
\cite{adhikari2020spectrum} 
that, for random dense graphs/complexes, the limiting spectral gap (between zero and nonzero eigenvalues) of the normalized Laplacian approaches $1/2$.
Another interesting and relevant result related to the spectra of random complexes was obtained by
\cite{beit2020spectral},
who consider random subset complexes.
Suppose $\Gamma$ is a full complex (also called a homological sphere) with all possible simplices of order up to $n-1$, i.e., an $n$-simplex. 
Let $\tilde{G}$ be a random subset of $\Gamma$, $\tilde{G}\subset \Gamma$, of size $\tilde{n}$ and let $\delta_k$ be the minimal (smallest nonzero) eigenvalue of the $k$-Laplacian $\Delta_k$ of $\tilde{G}$ for $k<n$. Then, for $k\geq 1$ and $\xi>0$, if the size $\tilde{n} = \lceil \frac{4k^2\log n}{\xi^2} \rceil$, we have~\citep{beit2020spectral} 
\[
\pr\left[  \delta_k < (1-\xi) \tilde{n}\right] \leq O\left(\frac{1}{n}\right).
\]
These results therefore suggest that random dense complexes will likely  have a large spectral gap between zero and nonzero eigenvalues. Indeed, this is exactly the regime (large $\zeta_k$) where the quantum algorithms are advantageous. As discussed by 
\cite{lloyd1996universal}, 
such dense complexes occur in TDA when the resolution scale $\veps$ is large. For such complexes, our proposed QTDA algorithm has great prospects to be NISQ implementable.

\paragraph{Approximate BNE:} When the spectral gap of the Laplacian $\tilde{\Delta}_k$ is not larger than the chosen threshold $\delta$, our NISQ-QTDA algorithm estimates an approximate Betti number by counting the (larger) eigenvalues above the threshold $\delta$. This was defined in~\cite{gyurik2020towards} as the problem of approximate Betti number estimation (ABNE). Such ABNE will be useful in certain situations, since our method provides an approach to filter out small (noisy) eigenvalues and only consider larger (dominant) eigenvalues for estimating the Betti number. These small nonzero eigenvalues occur when there are thinly (loosely) connected components in the complex.
Such connections likely occur when the resolution scale $\veps$ is small, and they might not persist when $\veps$ increases. Our approach therefore provides a way to filter out noise and  estimate the features that persist at larger resolution scale.
Moreover, we note that computing the moments of the Laplacians $\Delta_k$ (exponential in size) for different $k$ is non-trivial, and these moments can be used as features for certain downstream learning tasks, for example.

\begin{figure}[tb!]
    {\centering
           \includegraphics[width=1.0\textwidth]{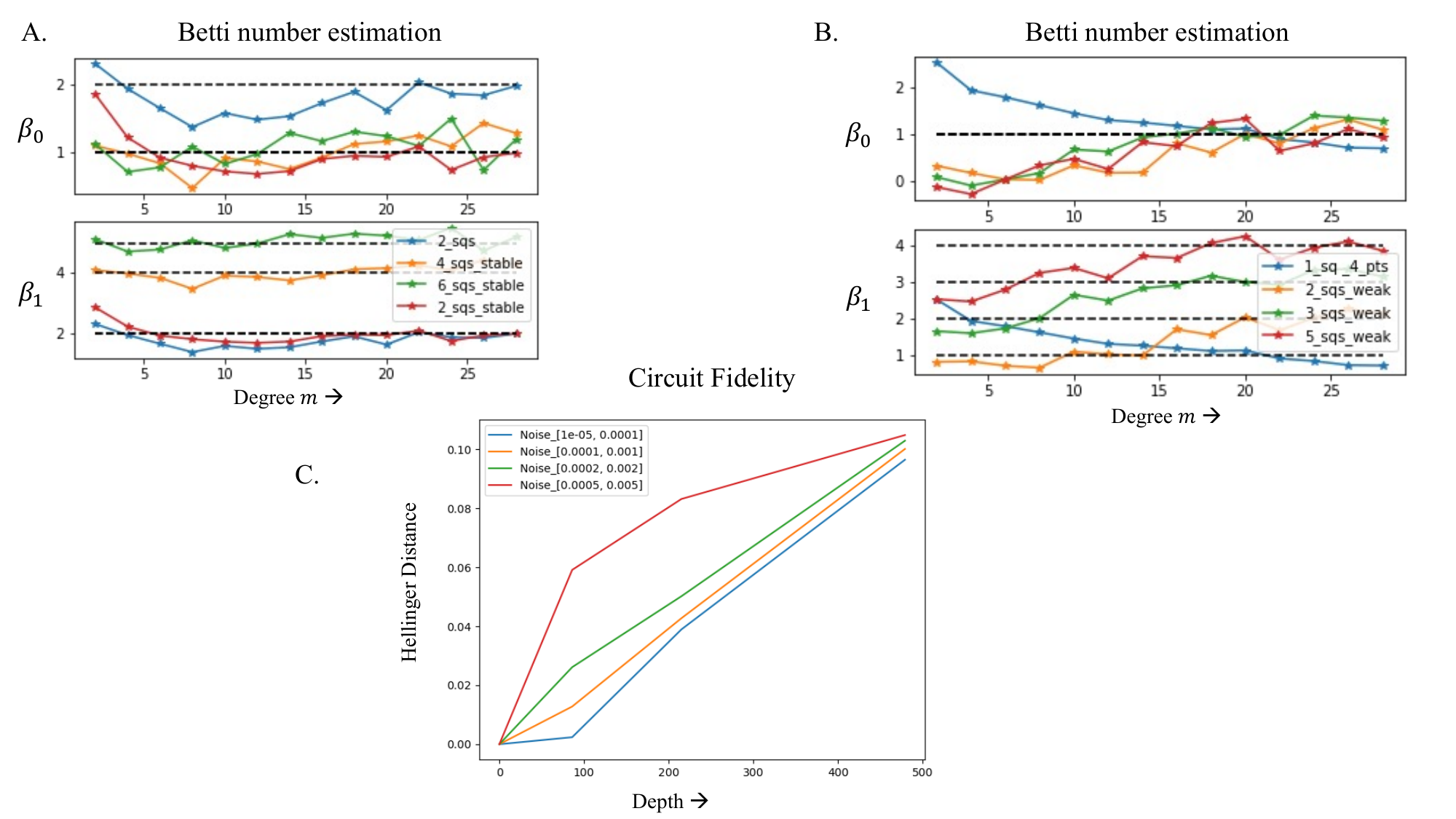}
    \caption{Results from classical and noisy simulations: Betti number estimation as a function of Chebyshev degree $m$ for (A.) simplicial complexes with stable connections/shape, i.e., large spectral-gap  and (B.) complexes with weak connections/shape.
    C. Fidelity of NISQ-QTDA circuit (measured using Hellinger distance) as a function of circuit depth for different noise-levels.}\label{fig:suppl1}}
\end{figure}

\section{Additional Experiment Results}\label{sec:addres}
In this section, we present a few additional experimental results that provide further insights into TDA and our NISQ-QTDA.

\subsection{Classical TDA results}
We first illustrate the performance of the classical version of the proposed stochastic Chebyshev method for Betti number estimation. Figure~\ref{fig:suppl1} (A and B) plot the Betti number estimated as a function of the Chebyshev polynomial degree $m$ for two classes of simplicial complexes, respectively. The first class  in Figure~\ref{fig:suppl1}(A) correspond to complexes that have stable shape (well connected), as in if we remove few edges, the shape (and hence the Betti numbers) do not change. The Laplacian corresponding to such complexes have a large spectral gap. We note that the Betti number estimated by our method for such complexes are fairly accurate and a small degree polynomial approximation suffices to get a good estimate of the Betti numbers. 
The black dash lines are the true Betti numbers and the star-solid lines correspond to the Betti numbers estimated for four different complexes all with $8$ vertices, respectively (2 squares, 4 squares, 6 squares/cube and 2 squares with diagonals included).

The second class in Figure~\ref{fig:suppl1}(B) correspond to complexes that have weak connections, such that removing a few edges changes the shapes (and Betti numbers) significantly. Such complexes have relatively Laplacian small spectral gap. We note that, we require a higher degree approximation for accurate Betti number estimation. We considered four different complexes (1 square and 4 dangling points, 2 squares with two edges connecting them, 3 squares, and 5 squares). 
Interestingly, we observe that the Betti vs.\ degree curves for the 3 similar complexes have a similar shape and we can be distinguish between them, even though the  Betti numbers estimated are not correct. This shows our method is provides a way to distinguish complexes/shape even with a low degree approximation.

Figure~\ref{fig:suppl1}(C) plots the Hellinger distance (a measure of fidelity) of our NISQ-QTDA circuit as a function of the circuit depth, i.e., number of vertices/qubits for different noise levels in simulations.
The square  Hellinger distance between two probability measures $P$ and $Q$ is defined as
\[
H^{2}(P,Q)={\frac {1}{2}}\int \left({\sqrt {dP}}-{\sqrt {dQ}}\right)^{2}.
\]
In the figure, we consider the Hellinger distance between the measured probabilities using noiseless simulations and noisy simulations. Four different noise levels were considered. The noise-level pairs show the 1-qubit and 2-qubit gate error rates. The measurement error rate was set to be same as the 2-qubit gate error rate. We plot the Hellinger distance restricted to the top $10\%$ of the noise-free outcomes, thereby focusing only those outcomes with sufficient probability mass at the given shot count. As expected, the Fidelity/noise-error increases as the circuit depth (and number of qubits) increase.

\begin{figure}[tb!]
    {\centering
          \includegraphics[width=0.45\textwidth]{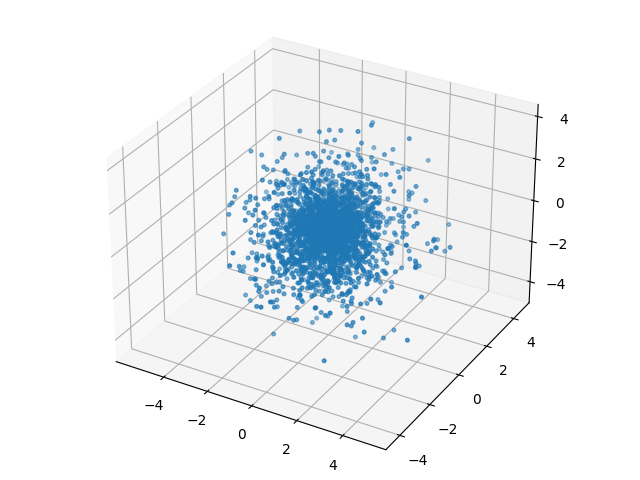}
          \includegraphics[width=0.45\textwidth]{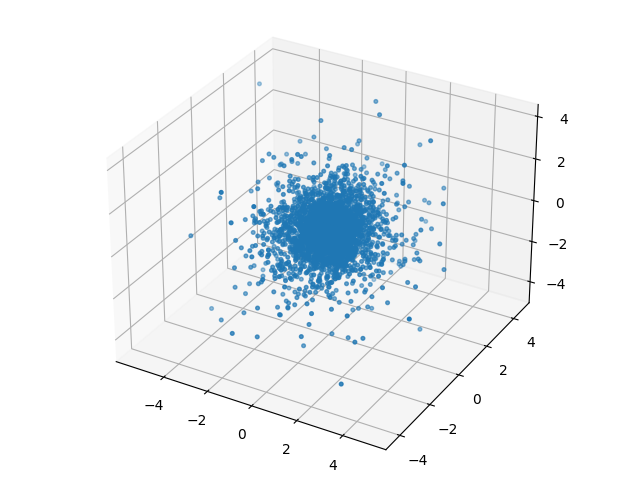}
           \includegraphics[width=0.45\textwidth]{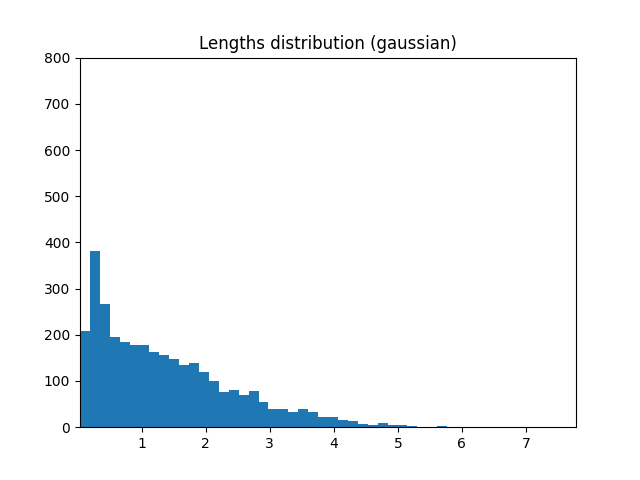}
          \includegraphics[width=0.45\textwidth]{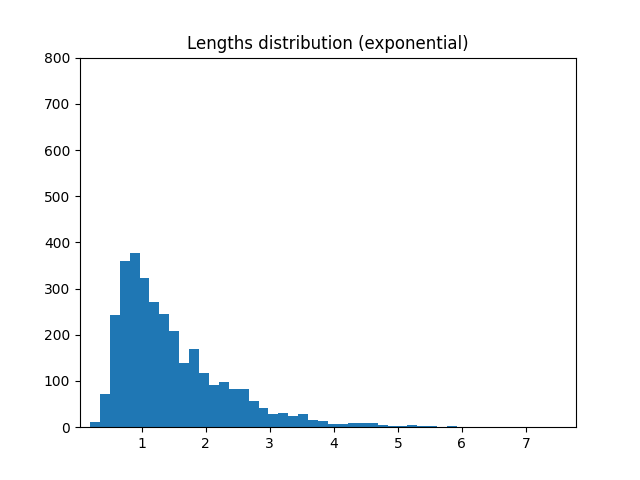}
         \includegraphics[width=0.45\textwidth]{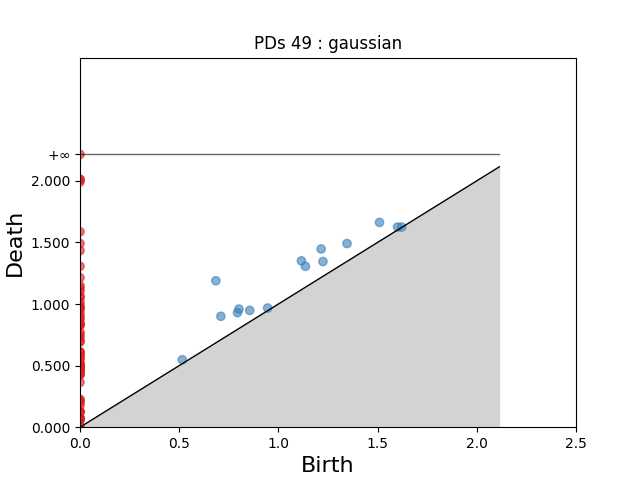}
          \includegraphics[width=0.45\textwidth]{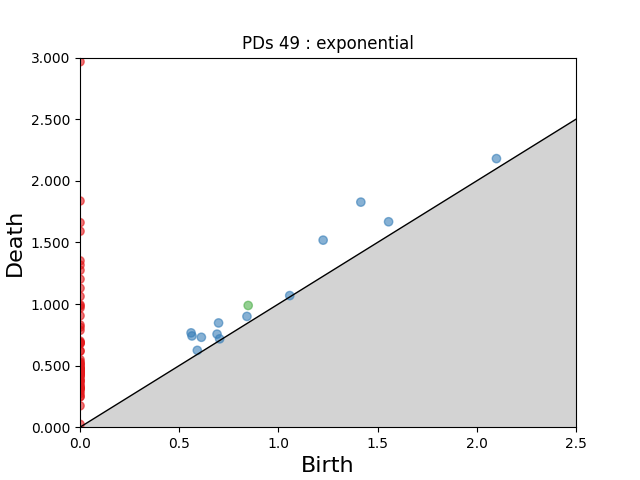}
    \caption{Results for CMB data: Random $n=64$ sample points per sample set in three dimensions from the Gaussian distribution (in left  column) and the exponential distribution (right column). The first row is the visualization of all 50 sample sets (30 train and 20 test). The second row is the histogram of norms of all the points, and the last row are example persistence diagrams for two sample sets, one from each distribution.}\label{fig:suppl2}}
\end{figure}

\subsection{CMB results}\label{sec:cmbres}

The cosmic microwave background (CMB) is the remnant after-glow of the Big Bang, forming an opaque background curtain in the sky.
The curtain dates back to when the Universe was about $380,000$ years old (redshift $z\approx 1100$) when the universe first became transparent to radiation.
CMB photons produce a nearly perfect black-body spectrum with a present day temperature of around $2.725$ K~\citep{Planck:2018vyg}.
This temperature fluctuates by around one part in $10^4$ depending upon the angle in the sky one looks.
The uniformity corroborates an epoch of cosmological inflation and provides a window into physics in the early Universe.
Recalling that a function $\phi$ is Gaussian when the vector $\vec{v} = (\phi(x_1),\ldots,\phi(x_n))$ is drawn from an $n$-dimensional Gaussian distribution for all $n$ and for all $x_i$, of particular interest are deviations from Gaussianity in the CMB.
These are typically assessed by looking at three-point and four-point correlation functions, the spatial bispectrum and trispectrum.
An application of
TDA compares simulations of the CMB with a Gaussian probability distribution to those with particular local injections of non-Gaussianity, \textit{e.g.}, by considering
$
\varphi(x) = \phi(x) + f_\text{NL} \big( \phi(x) - \langle \phi \rangle \big)^2 \,,
$
with $\phi(x)$ a Gaussian field and $f_\text{NL}$ a parameter capturing the amount of non-Gaussianity.
Preliminary investigations in this direction using classical TDA include~\citep{Cole:2017kve,Feldbrugge:2019tal,Biagetti:2020skr}, and the present state of the art detects $f_\text{NL}\sim \mathcal{O}(10)$.
Benchmarking against different values of $f_\text{NL}$, \NQTDA\ could compute higher-order Betti numbers associated with the actual sky and test for non-Gaussianity in the CMB to a potentially greater degree of sensitivity.

\begin{figure}[tb!]
    {\centering
          \includegraphics[width=0.45\textwidth]{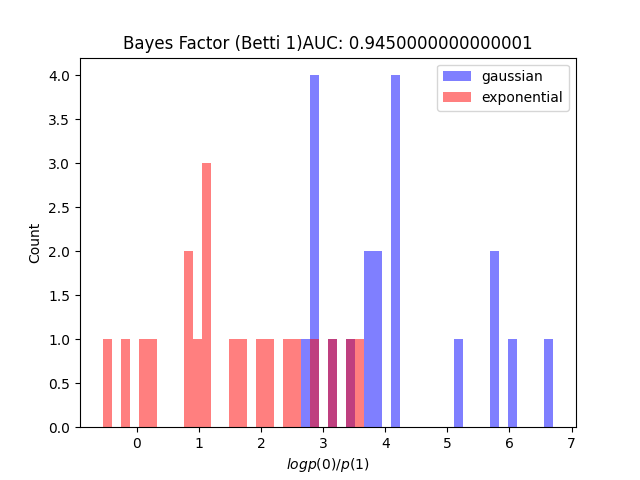}
          \includegraphics[width=0.45\textwidth]{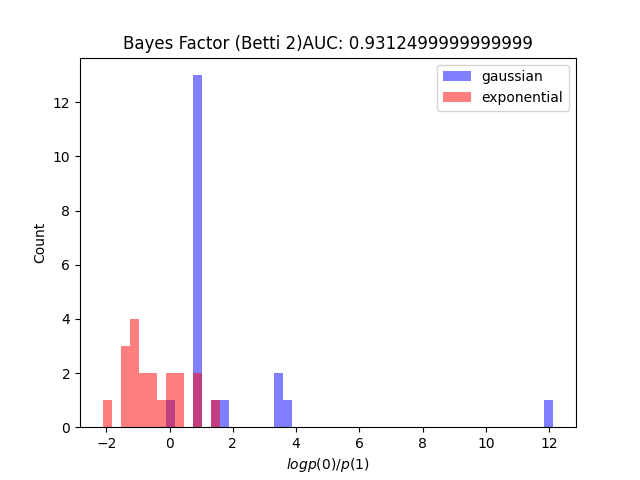}
           \includegraphics[width=0.45\textwidth]{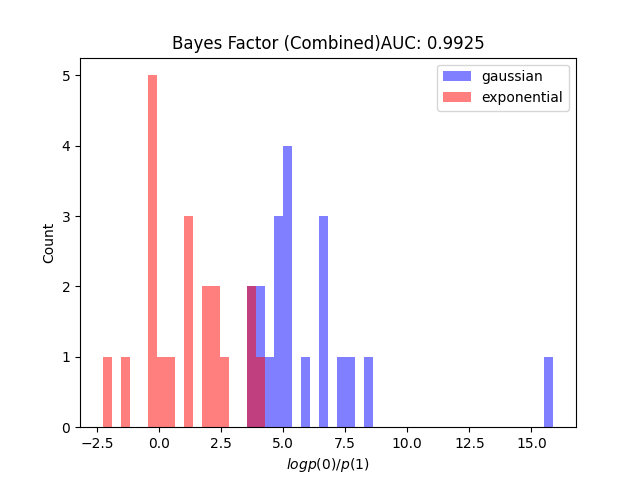}
    \caption{Results for simulated `CMB' data (Gaussian vs exponential \citep{adler2014crackle} , but with an enforced mean of zero and variance of unity (element-wise)):  Bayes factors and AUC obtain for the data distributions when $\beta_1$ (top left), $\beta_2$ (top right) and both $\beta_1$ and $\beta_2$ combined (bottom) were used  in the learning model }\label{fig:suppl3}}
\end{figure}

Here, we present preliminary results to illustrate how we might apply the insights of~\cite{adler2014crackle} (for example drawing high-dimensional points and excluding the `core') for the detection of non-Gaussianity in the CMB data. We consider random sets of $n=64$ sample points in three dimensions from Gaussian and exponential distributions. The visualization of the data distributions in 3D and the norm distributions are presented in Figure~\ref{fig:suppl2} (first two rows). Our goal is to illustrate that by only using homological information (Betti numbers in the form of `persistent diagrams') of very small sample sets (forced to have zero-mean and unit-variance, making the task much more difficult), we can still distinguish the two distributions. Furthermore, we wish to show that each extra Betti number considered boosts the accuracy. Such an illustration would suggest that we can detect non-Gaussianity in CMB data using NISQ-QTDA to a greater degree than low-order classical TDA since QTDA furnishes all high probability Betti numbers of any order.

In order to run our experiment, we use the GUDHI~\citep{maria2014gudhi} package for Betti number estimation of the data at different resolutions $\veps$ to create the persistence diagrams (representing occurrence/birth and disappearance/death of holes at resolution scale $\veps$). See the last row of Figure~\ref{fig:suppl2} for an example output of the GUDHI pipeline. Directly visualizing the points and the norm distributions illustrates that the distinguishability task is non-trivial (especially when the form of the distribution is unknown). Secondly, it is only when viewing the persistence diagrams (PDs) does the differing behavior become easier to detect. However, a single sample set's associated PD is unlikely to be rigorously helpful in telling apart another single sample set. Fortunately, in the simulation use-case as well as for the real CMB data (considering it's presumed rotational invariance) we may repeatedly draw sample sets, creating a set of PDs to train and test on. Therefore, we next employ a Bayesian (learning) package called BayesTDA \citep{maroulas2020bayesian} with train (30) and test (20) sample sets, to calculate two Poisson point-process posteriors from the training data, allowing for the calculation of the Bayesian likelihoods of the test data given the two competing posteriors, and culminating in a single Bayes factor for each test data set. The Bayes factors can then be used for classification by comparing to a threshold. To obviate the choice of a threshold we simply calculate the Area Under the Curve (AUC) measure of classification accuracy (0.5 is as bad as random guessing and 1 is perfect classification).

In Figure~\ref{fig:suppl3}, we present the Bayes factors and  AUC obtained for the data distributions when $\beta_1$ (top left), $\beta_2$ (top right) and both $\beta_1$ and $\beta_2$ combined (bottom) were used in the learning model. We observe that we can clearly distinguish the two distributions using the Bayes factors.
Moreover, we note that the different Betti numbers contain independent valuable information. Therefore, by potentially computing all relevant Betti numbers of the CMB data using NISQ-QTDA, and using the above approach, we expect to detect non-Gaussianity in the CMB data to higher sensitivity levels than is possible classically (since even for moderate numbers of data points, we cannot compute higher order Betti numbers classically).

\end{document}